\newcommand{\reals}{\mathbb{R}}
\newcommand{\iid}   {\mbox{i.i.d.} }
\newcommand{\eqdef}{ := }
\newcommand{\myproof}[1]{\noindent\hspace{2em}{\itshape #1 }}
\newcommand{\EE}{ {\sf E} }
\newcommand{\bfx}{ {\bf x} }
\newcommand{\bfs}{ {\bf s} }
\newcommand{\bfy}{ {\bf y} }
\newcommand{\bfX}{ {\bf X} }
\newcommand{\bfU}{ {\bf U} }
\newcommand{\bfS}{ {\bf S} }
\newcommand{\bfY}{ {\bf Y} }
\newcommand{\bfZ}{ {\bf Z} }
\newcommand{\calX}{ {\cal X} }
\newcommand{\calE}{ {\cal E} }
\newcommand{\calF}{ {\cal F} }
\newcommand{\calM}{ {\cal M} }
\newcommand{\calA}{ {\cal A} }
\newcommand{\calQ}{ {\cal Q} }
\newcommand{\calL}{ {\cal L} }
\newcommand{\calJ}{ {\cal J} }
\DeclareRobustCommand{\1}[1]{\ensuremath \mathbf{1}_{\{#1\}}}
\date{}
\newcommand{\wlim}{\operatorname*{w-lim}}
\newcommand{\pliminf}{\operatorname*{p-liminf}}
\newtheorem{theorem}{Theorem}
\newtheorem{lemma}[theorem]{Lemma}
\newtheorem{definition}[theorem]{Definition}
\newtheorem{corollary}[theorem]{Corollary}
\newtheorem{remark}[theorem]{\it Remark}
\begin{document}
\title{Typical Sequences for Polish Alphabets}
\author{Patrick Mitran,~\IEEEmembership{Member,~IEEE},
\thanks{ The author is with the 
Department of Electrical and Computer Engineering, University of Waterloo, 
Waterloo, Ontario N2L 3G1 (email:~pmitran@ecemail.uwaterloo.ca).}
}

\maketitle

\begin{abstract}
The notion of typical sequences plays a key role in the theory of information.
Central to the idea of typicality is that a sequence $x_1, x_2, \ldots, x_n$
that is $P_X$-typical should, loosely speaking,
have an empirical distribution that is in some sense close to the distribution $P_X$.
The two most common notions of typicality are that of strong (letter) typicality and
weak (entropy) typicality. While weak typicality allows one to apply many arguments
that can be made with strongly typical arguments, some arguments for
strong typicality cannot be generalized to weak typicality.

In this paper, we consider an alternate definition of typicality, namely one based
on the weak* topology and that is applicable to Polish alphabets 
(which includes $\reals^n$). This notion is a generalization of
strong typicality in the sense that it degenerates to strong typicality
in the finite alphabet case, and can also be applied to mixed and continuous distributions.
Furthermore, it is strong enough to prove a Markov lemma, and thus
can be used to directly prove a more general class of results than weak typicality.
As an example of this technique, we directly prove achievability for Gel'fand-Pinsker
channels with input constraints for a large class of alphabets and channels without 
first proving a finite alphabet result and then resorting to 
delicate quantization arguments.

While this large class does not include Gaussian distributions with
power constraints, it is shown to be straightforward to recover this
case by considering a sequence of truncated Gaussian distributions.
\end{abstract}

\begin{keywords}
Typical sequences, weak* topology, capacity, Gel'fand-Pinsker.
\end{keywords}

\section{Introduction}

Perhaps the most intuitive method of deriving achievable rates in information theory
for stationnary memoryless problems is with the concept of
typical sequences. Roughly speaking, a sequence is typical
if its empirical distribution is close, in some sense, to
some ideal distribution.

Ignoring minor variations in definitions, there are essentially
two broad notions of typical sequences. The first notion,
called weakly typical sequences, measures the closeness of a sequence
$\bfx = (x_1, \ldots, x_n)$ to a distribution $P_X$ by quantifying
the probability of the sequence $\bfx$. Specifically, the length $n$ sequence
$\bfx$ is $(P_X, \epsilon)$-weakly typical if
\begin{align}
 \left| \frac{1}{n}\sum_\ell \log P_X(x_\ell) - H(X) \right| < \epsilon,
\end{align}
where $P_X$ is a probability mass function (pmf) and $H(X)$ the entropy of $X$
if $X$ is discrete while $P_X$ is a probability density function (pdf) 
and $H(X)$ the differential entropy of $X$ if $X$ is continuous. 
For this reason, weakly typical sequences are often referred to as entropy typical.
The notion of weak typicality does not appear to generalize well
to mixed distributions.

By contrast, strong typicality characterizes a sequence $\bfx$
by the relative frequency of the occurrence of each letter of the alphabet of
$X$. Specifically, if $N(a|\bfx)$ denotes the number of occurrences
of the letter $a$ in the length $n$ sequence $\bfx$, then $\bfx$ is $(P_X, \epsilon)$-strongly typical
if
\begin{align}
 \left| N(a|\bfx)/n - P_X(a) \right| < \epsilon,
\end{align}
for all $a$\footnotemark. Strong typicality is sometimes referred to by the
more descriptive name of letter typicality. 
Evidently, strong typicality implies at most a countable alphabet.

\footnotetext{In some variations, the additional condition that $N(a|\bfx) = 0$ if $P_X(a) = 0$ is imposed.}

Strong typicality has at least two key consistency properties not shared with weak typicality.
First, strong typicality is sufficient for proving a Markov lemma, which
is a key technique in many network information theory proofs. The Markov
lemma is essentially a corollary of the following broad statement, which one 
would intuitively expect to be true for any reasonable definition of
typicality: {\em if a typical sequence, generated in some arbitrary fashion,
is input to a stationnary memoryless
channel, then the input and output sequences should be jointly typical in
some sense.} Unfortunately, this statement is not possible in general with weak typicality.
We call this desirable property the channel consistency property.

A second desirable property of a typical sequence involves cost functions.
Specifically, let $g:\calX \rightarrow \reals$ be a mapping from the alphabet of
$X$ to the reals. If the length $n$ sequence $\bfx$ is $P_X$-typical, one would expect that the weighted
sum $\frac{1}{n} \sum_\ell g(x_\ell)$ would be reasonably approximated by
$E_{P_X}[g(X)]$. While such a statement can be formalized for strong typicality,
entropy typicality by itself is not sufficient to imply this property\footnotemark. We call this property
the cost consistency property.

\footnotetext{It should be noted that for continuous/discrete distributions,
a variation of so-called distortion typical sequences 
can resolve this issue for a specific cost function $g(x)$.}

Finally, typical sequences should satisfy certain large deviations
results. For example, if $P_{XY}$ is a joint distribution with marginals $P_X$ and $P_Y$, 
$\bfy$ is $P_Y$-typical and $\bfX$ is generated independently of $\bfy$ with each letter
\iid according to $P_X$,
then the probability of $\bfX$ and $\bfy$ being jointly $P_{XY}$-typical should be
on the order of $\approx 2^{-nI(X;Y)}$. This result is usually shown for
strong typical sequences.

The contribution of this paper is to introduce a notion of typicality
based on any metric that induces the weak* convergence of probability measures. We call this
notion weak* typicality. The notion is
sufficiently general to apply to any distribution (discrete, continuous or mixed) 
where the alphabet is a Polish space, and reduces to strong typicality
for finite alphabets. This includes, for example, mixed distributions in
$\reals^n$.

We show that this notion of typicality 
allows for both of the above consistency properties in addition to the
usual rules that one expects a typical sequence to follow, e.g., if
a pair of sequences $(\bfx, \bfy)$ are jointly typical, one expects
that each of $\bfx$ and $\bfy$ are typical. As an example of this
weak* typicality, we directly prove an achievability result
for Gel'fand-Pinsker channels with alphabets in Polish spaces
and cost constraint at the transmitter without having to resort
to delicate quantization arguments which are typically handwaved. Indeed, a key contribution
of this work is that by employing the notion of weak* typicality, one
does not need to directly invoke quantization arguments.

Two important remarks are in order. While, the notion of
weak* typicality avoids the technical difficulties of first
proving a result in the discrete case and then employing
delicate quantization arguments, some of the large deviation results for weak*
typical sequences are proved by using quantization arguments.
However, these arguments are not necessary for the application
of weak* typical sequences. 

Second, for the cost consistency property to apply, 
the cost function must be bounded (and continuous).
This initially precludes weak* typical sequences from
being directly applicable to Gaussian input distributions
with power constraints. However, the result in the Gaussian case
can be recovered by considering a sequence of
truncated Gaussians.

The techniques proved here do not result in more general 
expressions for channel capacity. The most general expression
for channel capacity is given by information spectrum methods
\cite{Han:1994,Han:2003} which apply not only to channels with memory but even
non-ergodic channels. The information spectrum approach though,
looks at quantities such as
\begin{align}
 \underbar{I}(X;Y) \eqdef \pliminf_{n \rightarrow \infty} \frac{1}{n} \log \frac{W^n(\bfY^n | \bfX^n)}{P_{\bfY^n}(\bfY^n)}.
\end{align}
This characterization is based on ratios of probabilities and, similar
to weak typicality, this characterization does not appear to allow
for a Markov lemma. Nevertheless, the results
presented here allow one to generalize 
many results derived using strong typical sequences to a larger class of alphabets
in a straightforward manner.

Weak* convergence of probability measures has found some uses in
the information theory literature. Perhaps the most relevant application to
this work was by Csisz\'ar to compute the capacity of arbitrary
varying channels with general alphabets and states \cite{Csiszar:1992}.
In that work, channels must satisfy a weak* continuity property
and we adopt the same requirement for channels here. In \cite{Csiszar:1992},
the capacity is first computed for the special case of finite
input alphabets and this result is then used to derive the general case.
It should be noted that while weak* convergence of measures plays a key
role in \cite{Csiszar:1992}, no new notion of typical sequences is introduced there.

A key difficulty with continuous alphabets is the analytical
characterization of mutual information. In \cite{Schwarte:1996a}
it was shown that channel capacity is a lower semi-continuous
function in the weak* topology. In \cite{Schwarte:1996b}, sufficient
conditions are found which ensure that channel capacity can be approached
by discrete input distributions or uniform input distribution with finite support
for general alphabet channels.
In \cite{Fozundal:2005},
necessary and sufficient conditions  for weak* continuity and
strict concavity of mutual information were found in addition
to conditions that characterize the capacity value and capacity
achieving measure for channels with side-information at the receiver.
In \cite{Kieffer:1981} and \cite{Kieffer:1982}, Keiffer proves
coding theorems for stationnary (but not necessarily
memoryless) channels that are weak* continuous. While a notion
of typicality appears in \cite{Kieffer:1981} and \cite{Kieffer:1982},
it is a variant of strong typicality and defined only on discrete alphabets.

Weak* convergence has also found applications in analyzing the
stability of recursive algorithms such as variants of the LMS
adaptive filtering algorithm \cite{Bucklew:1993}.

The rest of this paper is structured as follows. In Section
\ref{sec:prelim}, we define channel inputs, input constraints,
channels and information measures such as divergence, as well as some
measure theoretic considerations. In Section \ref{sec:typical}, we define weak*
typicality and prove our key results. In Section \ref{sec:example},
we provide two examples of weak* typical sequences by proving achievability
results for point-to-point and Gel'fand-Pinsker channels. 
In Section \ref{sec:Gaussian}, we discuss the Gaussian case.
In Section \ref{sec:conclusion}, we conclude this work.
The Appendix contains two technical proofs 
for weak* typical sequences.


\section{Preliminaries}
\label{sec:prelim}

\subsection{Alphabets and weak* convergence}
\label{sec:prelim:alpha}

In this paper all measureable spaces are Polish (i.e., complete separable metric spaces) and endowed
with the Borel $\sigma$-field generated by the open sets.
We denote a measureable space by $E$ and the $\sigma$-field
by $\calE$. $\calM_1(E)$ denotes the set of probability measures on $E$.
When we need to distinguish two spaces, we employ
subscripts such as $(E_X, \calE_X)$ and $(E_Y, \calE_Y)$.
Unless clear from context, a random variable $X$ will take values in the alphabet $E_X$
and have distribution $\calL(X)$ which is usually denoted by $P_X$.

$P(A)$ and $E[X]$ denote the probability of event $A$ and mean of random variable $X$
where the underlying measure is always clear from context, or explicitly stated
by subscript.

If $E_X$ and $E_Y$ are Polish, then so is $E_X \times E_Y$.
The corresponding $\sigma$-field $\calE_{XY} \eqdef \calE_X \otimes \calE_Y$ is the
smallest $\sigma$-field containing all rectangles $A \times B$, $A \in \calE_X$, $B \in \calE_Y$.

A sequence $(x_1, \ldots, x_n) \in E_X^n$ will be denoted by $\bfx^n$.
When the length is clear from context or not relevant, we simply write $\bfx$.
An \iid random sequence $\bfX = (X_1, \ldots, X_n)$ consists of a sequence of independent random
variables $X_1, X_2, \ldots, X_n$, each taking values in $E_X$ and for which the 
laws $\calL(X_i) = P_X$ are identical.

In this paper, the notion of weak* convergence of probability
measures plays a key role. We denote the weak* convergence of a sequence of measures
$P_n$ to a limiting measure $P$ by $P = \wlim_{n \rightarrow \infty} P_n$.
The Portemanteau theorem \cite[Theorem 13.16]{Klenke:2008} provides the following equivalent conditions
which will be used in the sequel.
\begin{theorem}
 \label{thm:portemanteau}
 Let $E$ be a Polish space and $P, P_1, P_2, \ldots \in \calM_1(E)$. Then the
following are equivalent
\begin{enumerate}
 \item $P = \wlim_{n \rightarrow \infty} P_n$.
   \label{thm:pmt:wlim}
 \item $\lim_{n \rightarrow \infty} \int f\;dP_n = \int f \;dP$ for all bounded continuous $f$.
   \label{thm:pmt:bc}
 \item $\lim_{n \rightarrow \infty} P_n(A) = P(A)$ for all $A \in \calE$ with $P(\partial A) = 0$,
where $\partial A$ denotes the boundary of $A$.
   \label{thm:pmt:sets}
\end{enumerate}
We note that condition \ref{thm:pmt:bc} is usually taken to be the definition of weak* convergence.
\end{theorem}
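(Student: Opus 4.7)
The Portemanteau theorem is classical, and since the statement itself remarks that condition (\ref{thm:pmt:bc}) is usually adopted as the definition of weak* convergence, the notation in (\ref{thm:pmt:wlim}) is just shorthand for (\ref{thm:pmt:bc}) and there is nothing to prove there. Hence the real content is the equivalence (\ref{thm:pmt:bc}) $\Leftrightarrow$ (\ref{thm:pmt:sets}). My plan follows the standard route through auxiliary inequalities on closed and open sets.

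For (\ref{thm:pmt:bc}) $\Rightarrow$ (\ref{thm:pmt:sets}), I would first establish the closed-set bound $\limsup_n P_n(F) \leq P(F)$ for every closed $F \subseteq E$. Since $E$ is Polish, fix any compatible metric $d$ and define
\begin{align}
  f_k(x) \eqdef \max\!\bigl(0,\; 1 - k\, d(x,F)\bigr),\qquad k = 1, 2, \ldots.
\end{align}
Each $f_k$ is bounded and continuous, satisfies $\mathbf{1}_F \leq f_k \leq 1$, and $f_k \downarrow \mathbf{1}_F$ pointwise as $k \to \infty$. Condition (\ref{thm:pmt:bc}) gives $\int f_k\, dP_n \to \int f_k\, dP$ for each fixed $k$, so $\limsup_n P_n(F) \leq \int f_k\, dP$, and dominated convergence in $k$ yields $\limsup_n P_n(F) \leq P(F)$. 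Taking complements gives the dual open-set bound $\liminf_n P_n(G) \geq P(G)$. For any $A \in \calE$ with $P(\partial A) = 0$, the inclusions $A^\circ \subseteq A \subseteq \bar A$ combined with $P(A^\circ) = P(A) = P(\bar A)$ then sandwich $P_n(A)$ between two sequences converging to $P(A)$, giving (\ref{thm:pmt:sets}).

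For (\ref{thm:pmt:sets}) $\Rightarrow$ (\ref{thm:pmt:bc}), I would use the layer-cake representation. Write a bounded continuous $f$ as $f = f^+ - f^-$ and assume without loss of generality that $0 \leq f \leq M$; then
\begin{align}
  \int f\, dP_n \;=\; \int_0^M P_n(\{f > t\})\, dt,
\end{align}
and similarly for $P$. Continuity of $f$ gives $\partial\{f > t\} \subseteq \{f = t\}$, and since $\sum_t P(\{f = t\}) \leq P(E) = 1$, at most countably many $t \in [0,M]$ satisfy $P(\{f=t\}) > 0$. Therefore condition (\ref{thm:pmt:sets}) applies to the open set $\{f > t\}$ for Lebesgue-almost-every $t$, giving $P_n(\{f > t\}) \to P(\{f > t\})$ pointwise a.e. in $t$. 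Since the integrands are uniformly bounded by $1$, dominated convergence in $t$ yields $\int f\, dP_n \to \int f\, dP$, establishing (\ref{thm:pmt:bc}).

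The routine work lies in the smooth approximation $f_k$ of the indicator of a closed set, which is standard in any metric space. The one step that requires a genuine observation is the layer-cake direction: the key point is that $\partial\{f > t\} \subseteq \{f = t\}$ for continuous $f$ and that only countably many horizontal slices can be $P$-charged, so the set of ``bad'' $t$ has Lebesgue measure zero. I expect this to be the only subtlety; everything else is bounded/dominated convergence and taking complements.
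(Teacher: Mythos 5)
The paper does not prove this theorem; it is stated as a direct citation of \cite[Theorem 13.16]{Klenke:2008}, so there is no in-paper proof to compare against. Your argument is correct and is the standard textbook route: the Urysohn-type approximants $f_k(x)=\max(0,1-k\,d(x,F))$ give the closed-set $\limsup$ bound, complementation gives the open-set $\liminf$ bound, and the sandwich $A^\circ\subseteq A\subseteq\bar A$ with $P(\partial A)=0$ closes (\ref{thm:pmt:bc})$\Rightarrow$(\ref{thm:pmt:sets}); the layer-cake representation together with the observation that $\partial\{f>t\}\subseteq\{f=t\}$ is $P$-null for all but countably many $t$ gives the converse. One small stylistic redundancy: after writing $f=f^+-f^-$ you also say ``assume WLOG $0\leq f\leq M$.'' Either device alone suffices (decompose into positive and negative parts and apply the argument to each, or shift $f$ by a constant); invoking both is harmless but slightly muddled.
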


\subsection{Channels and Channel Inputs}
\label{sec:prelim:channel}

A memoryless channel from an input $X$ to an output $Y$ is described by
a transition kernel $W_{Y|X}(B|x)$ for $x \in E_X$, $B \in \calE_Y$
which must satisfy the usual measurability conditions of a kernel.
Furthermore, as in \cite{Csiszar:1992}, we make the following
additional continuity assumption on $W_{Y|X}(\cdot|x)$:

\begin{definition}
\label{def:channel}
 A transition kernel $W_{Y|X}(\cdot|x)$ is said to be a channel
if $W_{Y|X}(\cdot|x)$ depends continuously (in the weak* sense)
on $x$, i.e.,
\begin{align}
 W_{Y|X}(\cdot|x) = \wlim_{n \rightarrow \infty} W_{Y|X}(\cdot|x_n),
\end{align}
whenever  $x = \lim_{n} x_n$.
\end{definition}


Given a measure $P_X$ and a kernel $W_{Y|X}$ we denote by
$P_X \otimes W_{Y|X}$ and $P_X W_{Y|X}$ the measures
on $E_X \times E_Y$ and $E_Y$ uniquely defined by
\begin{align}
 (P_X \otimes W_{Y|X})(A \times B) & \eqdef \int_A W_{Y|X}(B|x) P_X(dx) \\
 (P_X W_{Y|X})(B) &\eqdef \int_{E_X} W_{Y|X}(B|x) P_X(dx),
\end{align}
for Borel sets $A \in \calE_X$ and $B \in \calE_Y$.
When clear from context, we will denote the marginal 
$P_X W_{Y|X}$ of $P_X \otimes W_{Y|X}$ by $P_Y$.

In practice, sequences are input to communication channels. In this
paper, all sequences belong to product spaces, i.e.,
an input sequence $\bfx = (x_1, \ldots, x_n)$ belongs
to $\times_{i=1}^n E_X \eqdef E_X^n$ while an output sequence,
say $\bfY^n$, belongs to $\times_{i=1}^n E_Y \eqdef E_Y^n$.

In general, an input sequence $\bfx$ results in a random output sequence $\bfY$
described by a transition
kernel $W_{\bfY|\bfX}(B^n|\bfx)$ where $B^n \in \calE_Y^n$.
In this paper,
all channels are stationnary and memoryless. Thus if a
sequence $\bfx = (x_1, \ldots, x_n)$ is input into to a kernel
$W_{\bfY|\bfX}$, then the probability that the output
$\bfY$ lies in a product set $\times_{\ell=1}^n B_\ell$, $B_\ell \in \calE_Y$, is
$\prod_{\ell=1}^n W_{Y|X}(B_\ell|x_\ell)$, where $W_{Y|X}(\cdot|x)$ is assumed
to satisfy the constraint of Definition \ref{def:channel}.

It is common for channel inputs to be required to satisfy some
constraints. For example, the Gaussian channel typically has
a power constraint
\begin{align}
 \frac{1}{n} \sum_{\ell=1}^n |x_\ell|^2 < P \label{eq:pconstraint}.
\end{align}

We now establish the equivalent concept in this paper.
\begin{definition}
We say that a function
$g(x)$ and a threshold $\Gamma$ form an input constraint 
provided $g(\cdot)$ is continuous and bounded. We say that the
input vector $\bfx = (x_1, \cdots, x_n)$ satisfies the input constraint provided
\begin{align}
 \frac{1}{n} \sum_{\ell=1}^n g(x_\ell) < \Gamma,
\end{align}
and with abuse of notation, we define 
$g(\bfx) \eqdef \frac{1}{n} \sum_{\ell=1}^n g(x_\ell)$. 
\end{definition}

\begin{remark}
A bounded constraint or cost $g(x)$ is sometimes assumed in
the literature \cite{Kramer:2008}. If the input alphabet $E_X$
is compact, then any continuous $g(x)$ is always bounded.
While the bounded assumption on $g(x)$ initially precludes
a power constraint of the form \eqref{eq:pconstraint} when the alphabet is $\reals$, we will
see that the classic result in the additive Gaussian noise case with a power constraint
can be recovered by considering a sequence (in $L$) of
compact input alphabet $E_{X_L}$.
Finally, if the sequence of empirical input distributions on $X$
converges weakly* to $P_X$, then the continuity assumption
can be relaxed to $P_X(D_g) = 0$ where $D_g$ is the
set of discontinuities of $g(\cdot)$ (see part (iii) of \cite[Theorem 13.16]{Klenke:2008}).
\end{remark}

\subsection{Two Examples}

We now give two common examples of alphabets that satisfy the
above constraints.

First, consider two random variables $X$ and $Y$ whose alphabets $E_X$ and $E_Y$
are finite. Then these trivially satisfy the assumptions of 
Section \ref{sec:prelim:alpha} if we choose as metric the
trivial metric $d_X(x_1,x_2) = 1$ if $x_1 \neq x_2$ and 0 otherwise,
and likewise for $d_Y(\cdot,\cdot)$.
Furthermore, in this case, any channel from $X$ to $Y$ satisfies
the weak* continuity assumption of Section \ref{sec:prelim:channel} since
if a sequence $x_n \rightarrow x$, then in fact $x_n = x$ for
all $n$ greater than some $N$, i.e. $W(\cdot|x_n) = W(\cdot|x)$
for all $n > N$.

As a second example, we consider two random variables $X$ and $Y$
whose alphabets are $E_X = E_Y = \reals^N$. With the usual metric on
the real line, these are Polish spaces. Furthermore, suppose that
$Y = X + Z$, where $Z$ is independent of $X$ and has a density $f(z)$, i.e.,
\begin{align}
 W_{Y|X}(B|x) = \int_B f(y-x) \; dy.
\end{align}
Then as shown in \cite[Lemma 2]{Schwarte:1996a}, the channel $W_{Y|X}$ satisfies
the weak* continuity assumption of Section \ref{sec:prelim:channel}.

\section{Information Measures}

\subsection{Definitions}

In this section, we provide some basic background on information
measures and introduce a key result.

Let $P$ and $M$ be two probability measures defined on $E$ and
let $\calQ = \{Q_1, Q_2, \ldots, Q_{|\calQ|} \}$ be a finite (measurable) 
partition of $E$. Then, we define (see \cite[Section 2.3]{Gray:2009})
\begin{align}
 H_{P||M}(\calQ) \eqdef \sum_{i=1}^{|\calQ|} P(Q_i) \log \frac{P(Q_i)}{M(Q_i)}.
\end{align}

The divergence of $P$ with respect to $M$ is defined as (see \cite[Section 5.2]{Gray:2009})
\begin{align}
 D(P||M) = \sup_{\calQ} H_{P||M}(\calQ),
\end{align}
where the supremum is over all finite measurable partitions $\calQ$ of $E$.

Recall that a field $\calF$ has the properties that i) $E \in \calF$,
ii) if $F \in \calF$ then $F^c \in \calF$ and iii) $\calF$ is closed under
{\em finite} unions. In the next subsection, we will construct a 
field that, in addition to generating the $\sigma$-field, i.e., $\calE = \sigma(\calF)$\footnotemark,
has a desirable property. Fields that generate the $\sigma$-field
are of particular interest due to the following two lemmas.

\footnotetext{$\sigma(\calF)$ is the smallest $\sigma$-field that contains the family $\calF$ of sets.} 

\begin{lemma} \cite[Lemma 5.2.2]{Gray:2009}
 \label{lem:Gray}
 Let $(E, \calE)$ be a measurable space, $\calF$ a field that generates $\calE$
and $P$ and $M$ two measures defined on this space. Then
\begin{align}
 D(P||M) = \sup_{\calQ \subset \calF} H_{P||M}(\calQ).
\end{align}
\end{lemma}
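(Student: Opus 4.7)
The plan is to establish the nontrivial inequality $\sup_{\calQ\subset\calF} H_{P\|M}(\calQ) \geq D(P\|M)$, since the reverse inequality is immediate from the fact that every partition into cells in $\calF$ is a finite measurable partition. The strategy is to take a near-optimal measurable partition, approximate each of its cells by a set in $\calF$, and then show that the resulting $H_{P\|M}$ barely changes. I will write the argument for $D(P\|M) < \infty$; the case $D(P\|M) = \infty$ follows by the same construction applied to partitions $\calQ$ with $H_{P\|M}(\calQ) > N$ for each $N$.

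First I would fix $\epsilon > 0$ and pick a finite measurable partition $\calQ = \{Q_1, \dots, Q_k\}$ with $H_{P\|M}(\calQ) > D(P\|M) - \epsilon$. Since $D(P\|M) < \infty$ forces $P \ll M$ on the atoms of $\calQ$, I may merge cells so that $P(Q_i) > 0$ and $M(Q_i) > 0$ for every $i$; cells with $P(Q_i) = 0$ contribute zero, and a cell with $M(Q_i) = 0$ and $P(Q_i) > 0$ would make $H_{P\|M}(\calQ)$ infinite. Next, I would invoke the standard field-approximation theorem: since $\calF$ generates $\calE$ and $P + M$ is a finite measure on $(E, \calE)$, for every $A \in \calE$ and every $\delta > 0$ there exists $F \in \calF$ with $(P+M)(A \triangle F) < \delta$. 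Apply this to each $Q_i$ to obtain $F_1, \dots, F_k \in \calF$.

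The sets $F_i$ need not be disjoint nor cover $E$, so I would disjointify them inside the field by setting $F_1' = F_1$, $F_i' = F_i \setminus \bigcup_{j<i} F_j$ for $1 < i < k$, and $F_k' = E \setminus \bigcup_{j<k} F_j$. Each $F_i' \in \calF$ since $\calF$ is closed under finite unions, intersections and complements, and $\{F_i'\}_{i=1}^k$ is a genuine partition of $E$. By the triangle inequality for symmetric differences, $(P+M)(Q_i \triangle F_i') \leq k\delta$, so $P(F_i') \to P(Q_i)$ and $M(F_i') \to M(Q_i)$ as $\delta \to 0$.

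The main obstacle is the continuity step, because $(p,m) \mapsto p \log(p/m)$ blows up as $m \to 0^+$. This is precisely why the preliminary cleanup ensuring $P(Q_i), M(Q_i) > 0$ is essential: the limiting values lie in the open quadrant $(0,\infty)^2$, where the map is continuous, and for $\delta$ small enough $P(F_i'), M(F_i')$ stay bounded away from $0$ as well. Hence $H_{P\|M}(\{F_i'\}) \to H_{P\|M}(\calQ)$ as $\delta \to 0$, and choosing $\delta$ small yields $H_{P\|M}(\{F_i'\}) > D(P\|M) - 2\epsilon$. Since $\epsilon$ was arbitrary and $\{F_i'\} \subset \calF$, the claimed identity follows.
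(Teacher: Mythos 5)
The paper cites this result from Gray's text (Lemma 5.2.2 of \cite{Gray:2009}) without reproducing a proof, so there is no in-paper argument to compare against; your attempt must stand on its own.

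The outline you give (approximate a near-optimal measurable partition by a field partition, disjointify within $\calF$, and pass to the limit) is the right skeleton, and your disjointification and $(P+M)(Q_i\triangle F_i')\leq k\delta$ bound are correct. The gap is in the preliminary cleanup. You claim you ``may merge cells so that $P(Q_i)>0$ and $M(Q_i)>0$ for every $i$'' on the grounds that $P(Q_i)=0$ cells contribute zero. But merging such a cell into a neighbouring cell is \emph{not} $H$-preserving: if $P(Q_0)=0$, $M(Q_0)=m_0>0$ and you absorb $Q_0$ into $Q_1$, the contribution changes from $P(Q_1)\log\frac{P(Q_1)}{M(Q_1)}$ to $P(Q_1)\log\frac{P(Q_1)}{M(Q_1)+m_0}$, a strict decrease that can be large. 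Concretely, on a two-point space with $P=\delta_a$ and $M=\frac12(\delta_a+\delta_b)$, the only all-positive partition is $\{E\}$, which gives $H=0$, while $D(P\|M)=\log 2$; so the supremum over all-positive partitions genuinely falls short, and the merge you invoke is not available in general.

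The argument is nevertheless salvageable. After grouping, there remains at most one cell $Q_0$ with $P(Q_0)=0$ but $M(Q_0)>0$ (cells with $P=M=0$ are $(P+M)$-null and can be absorbed anywhere harmlessly). For that residual cell the pair $(P(F_0'),M(F_0'))$ tends to $(0,M(Q_0))$ with $M(Q_0)>0$, and $p\log(p/m)\to 0$ as $p\to 0^+$ with $m$ bounded away from $0$, so the approximating term still converges to the true contribution $0$; continuity on the open quadrant is only needed for the cells with $P(Q_i)>0$. You should also be a bit more careful with the $D(P\|M)=\infty$ case than ``the same construction'': there a cell can have $P(Q_i)>0$ and $M(Q_i)=0$, continuity fails, and one must instead argue directly that $P(F_i')\log\bigl(P(F_i')/M(F_i')\bigr)\to+\infty$ while every other term is bounded below by $-1/e$, so the field partitions still drive $H_{P\|M}$ to $+\infty$.
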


The above lemma states that it is sufficient to restrict the finite partitions
to subsets of a generating field. Suppose $(E_X, \calE_X)$ and $(E_Y, \calE_Y)$
are measure spaces, generated by the fields $\calF_X$ and $\calF_Y$
respectively. Then the product $\sigma$-field $\calE_{XY}$ is generated
by the field of rectangles $\calF_{XY} = \calF_X \times \calF_Y$.
Thus, we have the following result (see \cite[Lemma 5.5.1]{Gray:2009}).
\begin{lemma}
\label{lem:Ixy}
Let $P_{XY}$ be a measure on $E_{XY}$ and $P_X$ and $P_Y$ the respective marginals
on $E_X$ and $E_Y$. Then
\begin{align}
 I(X;Y) = D(P_{XY}||P_X \times P_Y) 
        = \sup_{\calQ_X \subset \calF_X, \calQ_Y \subset \calF_Y} H_{P_{XY} || P_X \times P_Y}(\calQ_X \times \calQ_Y).
\end{align}
\end{lemma}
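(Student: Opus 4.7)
The plan is to deduce the second equality from Lemma \ref{lem:Gray} applied to a suitable generating field. The first equality, $I(X;Y)=D(P_{XY}||P_X\times P_Y)$, is the measure-theoretic definition of mutual information for general alphabets and requires no further argument.

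For the second equality, I would work with $\calF_{XY}=\calF_X\times\calF_Y$, the field generated by the rectangles $A\times B$, $A\in\calF_X$, $B\in\calF_Y$. Since the rectangles are closed under finite intersections and $\sigma(\calF_X)=\calE_X$, $\sigma(\calF_Y)=\calE_Y$, standard measure theory gives $\calF_{XY}$ as the collection of finite disjoint unions of rectangles, and $\sigma(\calF_{XY})=\calE_{XY}$. Lemma \ref{lem:Gray} then yields
\begin{align}
 D(P_{XY}||P_X\times P_Y)=\sup_{\calQ\subset\calF_{XY}} H_{P_{XY}||P_X\times P_Y}(\calQ).
\end{align}
The inequality ``$\geq$'' in the statement is immediate, since any product partition $\calQ_X\times\calQ_Y$ with $\calQ_X\subset\calF_X$, $\calQ_Y\subset\calF_Y$ is itself a finite partition drawn from $\calF_{XY}$.

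For the reverse inequality, I would take an arbitrary finite partition $\calQ=\{Q_1,\ldots,Q_k\}\subset\calF_{XY}$ and write each $Q_i$ as a finite disjoint union of rectangles $A_{ij}\times B_{ij}$ with $A_{ij}\in\calF_X$, $B_{ij}\in\calF_Y$. Let $\calQ_X$ be the finite partition of $E_X$ formed by the atoms of the (finite) algebra generated by all the $A_{ij}$, and define $\calQ_Y$ analogously. Then $\calQ_X\subset\calF_X$, $\calQ_Y\subset\calF_Y$, and by construction the product partition $\calQ_X\times\calQ_Y$ refines $\calQ$. Invoking the standard monotonicity of $H_{P||M}$ under refinement (a cell-by-cell application of the log-sum inequality, using the conventions $0\log 0=0$ and $p\log(p/0)=+\infty$), one obtains
\begin{align}
 H_{P_{XY}||P_X\times P_Y}(\calQ)\leq H_{P_{XY}||P_X\times P_Y}(\calQ_X\times\calQ_Y),
\end{align}
so the supremum over product partitions already attains the supremum over all of $\calF_{XY}$, which by Lemma \ref{lem:Gray} equals $D(P_{XY}||P_X\times P_Y)$.

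I expect the only real technical wrinkle to be the bookkeeping step of decomposing a general element of $\calF_{XY}$ into rectangles and building $\calQ_X,\calQ_Y$ so that their product refines $\calQ$. The log-sum/refinement inequality is textbook, and measure-zero cells on the $M=P_X\times P_Y$ side are harmless because the convention $p\log(p/0)=+\infty$ guarantees refinement cannot decrease $H_{P||M}$.
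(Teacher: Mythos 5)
Your proposal is correct and follows essentially the same route the paper takes; the paper simply defers to \cite[Lemma~5.5.1]{Gray:2009} after observing that the rectangles $\calF_X \times \calF_Y$ generate $\calE_{XY}$, whereas you spell out the step that the citation covers. To be precise about what you add: Lemma~\ref{lem:Gray} applied to the field $\calF_{XY}$ only gives $D(P_{XY}\|P_X\times P_Y)=\sup_{\calQ\subset\calF_{XY}}H_{P_{XY}\|P_X\times P_Y}(\calQ)$, where the partitions range over all finite partitions whose cells are finite disjoint unions of rectangles, not just product partitions $\calQ_X\times\calQ_Y$. Your reduction of a general $\calQ\subset\calF_{XY}$ to a refining product partition (atoms of the finite algebras generated by the rectangle sides, which stay inside $\calF_X$ and $\calF_Y$ because fields are closed under the requisite Boolean operations) together with the log-sum monotonicity of $H_{P\|M}$ under refinement is exactly the missing step, and it is handled correctly, including the $0\log 0=0$ and $p\log(p/0)=+\infty$ conventions. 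So the argument is sound and fills in the detail the paper glosses by citation.
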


\subsection{A Special Field}


In this subsection, given a measure $P$, we will construct a generating field $\calF_P$ with 
the key property that the $P$-measure of the boundary of any
set in the field is zero. This last property is desirable as it will ensure
that if a sequence of measures $P_n$ converges weakly* to $P$, then
$P_n(A) \rightarrow P(A)$ for all sets $A \in \calF_P$.

While there is no lack of standard constructions for fields that generate the $\sigma$-field,
given any such field $\calF$, there is no guarantee that $P(\partial A) = 0$ for all $A \in \calF$
and thus, it is necessary to construct such a generating field $\calF_P$ specifically
{\em for each limiting measure} $P$.

\begin{lemma}
\label{lem:field}
 Let $P$ by a probability measure defined on a Polish measure space $(E, \calE)$.
 Then there exists a countable family $\calA \subset \calE$ of open sets that
i) generates the Borel $\sigma$-field $\calE$, i.e., $\sigma(\calA) = \calE$, and
ii) $P(\partial A) = 0$ for all $A \in \calA$.
\end{lemma}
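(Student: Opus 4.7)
The plan is to build $\calA$ out of open balls whose centres lie in a countable dense subset of $E$ and whose radii are chosen from carefully selected dense subsets of $(0,\infty)$. Since $E$ is Polish, pick a countable dense set $\{d_1, d_2, \ldots\} \subset E$. For each centre $d_i$, the open balls $B(d_i,r) \eqdef \{y \in E : d(y,d_i) < r\}$ have boundaries contained in the spheres $S(d_i,r) \eqdef \{y : d(y,d_i) = r\}$, and the family $\{S(d_i,r)\}_{r > 0}$ is pairwise disjoint. Because $P$ is a finite measure, at most countably many of these spheres can have positive $P$-measure, so the set
\begin{align}
R_i \eqdef \{r > 0 : P(\partial B(d_i,r)) = 0\}
\end{align}
is cocountable in $(0,\infty)$, hence dense in $(0,\infty)$. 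From each $R_i$ extract a countable dense subset $R_i^\star$, and set
\begin{align}
\calA \eqdef \{ B(d_i, r) : i \geq 1,\ r \in R_i^\star \}.
\end{align}
This family is countable, consists of open sets, and by construction satisfies $P(\partial A) = 0$ for every $A \in \calA$, giving property (ii).

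For property (i), I would show that $\calA$ already generates the topology, and hence the Borel $\sigma$-field. Given an open set $U \subset E$ and any point $x \in U$, there exists $\epsilon > 0$ with $B(x,\epsilon) \subset U$. Choose $d_i$ with $d(x,d_i) < \epsilon/4$, which is possible by density of $\{d_i\}$, and then choose $r \in R_i^\star$ with $\epsilon/4 < r < \epsilon/2$, which is possible by density of $R_i^\star$ in $(0,\infty)$. Then $x \in B(d_i,r)$ and the triangle inequality gives $B(d_i,r) \subset B(x, r + \epsilon/4) \subset B(x,\epsilon) \subset U$. Hence every open set is a countable union of members of $\calA$, so $\sigma(\calA)$ contains the topology, and therefore $\sigma(\calA) = \calE$.

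The only genuinely delicate step is the boundary-measure argument, and this is where the disjointness of the family of spheres of different radii about a common centre is essential: it lets a single finite measure $P$ rule out only countably many bad radii per centre, leaving a dense reservoir of good ones. Once this observation is made, the rest is a routine density-of-balls argument in a separable metric space. I would not worry about assembling the field $\calF_P$ inside the lemma itself, since the statement only asks for a countable generating family with the boundary property; the field used in the sequel can be obtained afterwards by closing $\calA$ under finite unions, intersections, and complements, which preserves countability and (via $\partial(A \cup B), \partial(A \cap B), \partial(A^c) \subset \partial A \cup \partial B$) the null-boundary property.
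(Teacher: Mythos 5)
Your proof is correct and follows essentially the same route as the paper's: a countable dense set of centres, and for each centre a countable dense set of radii producing balls with $P$-null boundaries, with the countability of ``bad'' radii per centre being the crux. The paper argues this via the monotone function $F_x(r) = P(B(x,r))$ having at most countably many jumps, while you argue via disjointness of spheres under a finite measure; these are precisely the same fact (the jump of $F_x$ at $r$ is exactly $P(S(x,r))$), so the two proofs are interchangeable.
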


\begin{proof}
Since $E$ is Polish, there is a  countably dense subset of $E$, say $E'$. 
For $\calA$ to generate $\calE$, it is sufficient that for each 
$x \in E'$ there is a countably dense subset $R_x$ of $\reals^+$
with each ball $B(x, r) \in \calA$ for all $r \in R_x$. This is because then
each open set of $E$ is the countable union of balls in $\calA$.

It thus remains to be shown that the sets $R_x$ can be chosen
such that each ball $B(x, r)$ has $P(\partial B(x,r)) = 0$.
For $r > 0$, let $F_x(r) = P(B(x,r))$. Then
$F_x(r)$ is a non-decreasing, bounded below by 0 and above by 1.
Thus it has left and right limits and at most a countable number of jump discontinuities
and thus, there is a countably dense subset $R'_x$ of $\reals^+$
for which $F_x(r)$ is continuous. We claim that choosing $R_x = R'_x$
will do. Since $\partial B(x,r) \subset \{y : d(x, y) = r\}$,
then $P(\partial B(x,r)) \leq F_x(r+) - F_x(r)$ where
$F_x(r+)$ is the right limit of $F_x(r)$. However, for $r \in R'_x$, 
$F_x(r+) = F_x(r)$ by continuity.
\end{proof}

\begin{corollary}
 \label{cor:field} 
  Let $P$ be a probability measure defined on a Polish measure space $(E, \calE)$.
 Then there exists a countable generating field $\calF_P$ with the property that $P(\partial A) = 0$
for all $A \in \calF_P$.
\end{corollary}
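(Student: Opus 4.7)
The plan is to upgrade the countable family $\calA$ of open sets produced by Lemma \ref{lem:field} into a genuine field while preserving the null-boundary property, and the key observation is that taking boundaries behaves well under the three field operations.

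First, I would define $\calF_P$ to be the smallest field containing $\calA$, obtained concretely by iteratively closing $\calA$ under finite unions, finite intersections, and complements (adjoining $\emptyset$ and $E$ at the start). Since $\calA$ is countable and each stage of the closure only adjoins countably many new sets built from finitely many previous ones, a straightforward induction shows that $\calF_P$ is countable.

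Next, I would verify the boundary property for every $A \in \calF_P$. The three elementary facts I would lean on are
\begin{align}
 \partial(A \cup B) &\subseteq \partial A \cup \partial B, \\
 \partial(A \cap B) &\subseteq \partial A \cup \partial B, \\
 \partial(A^c) &= \partial A.
\end{align}
Any $A \in \calF_P$ is built from finitely many elements of $\calA$ by finitely many applications of $\cup$, $\cap$, and complementation, so by induction $\partial A$ is contained in a finite union of boundaries of sets from $\calA$. Lemma \ref{lem:field} gives $P(\partial A') = 0$ for each such $A' \in \calA$, hence $P(\partial A) = 0$ by subadditivity.

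Finally, to confirm that $\calF_P$ generates $\calE$, I would use the chain $\calA \subseteq \calF_P \subseteq \calE$, so $\calE = \sigma(\calA) \subseteq \sigma(\calF_P) \subseteq \calE$, giving $\sigma(\calF_P) = \calE$. There is no genuine obstacle here, since Lemma \ref{lem:field} has already done the real analytic work of finding open balls with null-measure boundary; the only point that needs a moment of care is the countability bookkeeping during the inductive closure, which is standard.
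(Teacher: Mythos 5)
Your proof is correct and follows essentially the same route as the paper's: close $\calA$ under the field operations and propagate the null-boundary property via $\partial(A\cup B)\subseteq\partial A\cup\partial B$, $\partial(A\cap B)\subseteq\partial A\cup\partial B$, and $\partial(A^c)=\partial A$. You are slightly more explicit than the paper about the iterative closure, the countability bookkeeping, and the $\sigma(\calA)\subseteq\sigma(\calF_P)$ step, but the underlying idea and the key lemmas used are identical.
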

\begin{proof}
 Since  $\partial(A \cap B) \subset \partial A \cup \partial B$,
 $\partial(A \cup B) \subset \partial A \cup \partial B$ and 
 $\partial A = \partial(A^c)$, one can
 extend the countable family $\calA$ in Lemma \ref{lem:field} to include all finite 
 intersections/unions of balls in $\calA$ and complements of balls in $\calA$, i.e.,
 extend $\calA$ to a field $\calF_P$.
\end{proof}

\section{Weak* Typical Sequences}
\label{sec:typical}

\subsection{Definitions}


Given a sequence $\bfx = (x_1, \ldots, x_n) \in E_X^n$, one can associate
an empirical distribution $P_{\bfx}$ defined by
\begin{align}
 P_{\bfx}(A) \eqdef \frac{1}{n} \sum_{\ell=1}^n \1{x_\ell \in A}.
\end{align}
When clear from context, we denote the empirical distribution
by $P_X^n$. Likewise, given two sequences $\bfx = (x_1, \ldots, x_n) \in E_X^n$
and $\bfy = (y_1, \ldots, y_n) \in E_Y^n$, the joint empirical distribution
$P_{\bfx,\bfy}$ is defined by
\begin{align}
 P_{\bfx,\bfy}(A \times B) \eqdef \frac{1}{n} \sum_{\ell=1}^n \1{x_\ell \in A} \1{y_\ell \in B},
\end{align}
which is denoted by $P_{XY}^n$ when clear from context.

A sequence $\bfx$ should be typical if
its empirical distribution is in some sense close to some
probability measure. In this paper, closeness is measured
with respect to the weak* topology.

Specifically, let $d(\cdot, \cdot)$ be any metric on the
space of probability measures $\calM_1(E)$ that induces
the weak* topology and fix this metric for the rest of the
paper. The Prohorov metric is an example of
such a metric and the exact choice of the metric is irrelevant.

We denote by $B(M, \epsilon)$ the ball of distributions
$\{P \in \calM_1(E) : d(P,M) < \epsilon \}$.
We will say that an empirical
distribution $P^n$ is $P$-typical when its distance from
$P$ is sufficiently small. We make the following definitions.

\begin{definition}
 Let $d(\cdot,\cdot)$ be a metric on the
space of probability measures $\calM_1(E_X)$ that induces
the weak* topology. A sequence $\bfx = (x_1, \ldots, x_n)$
is said to be weakly* $(P_X,\epsilon)$-typical if $d(P_{\bfx}, P_X) < \epsilon$.
Similarly, we say that an empirical distribution $P_X^n$
is weakly* $(P_X,\epsilon)$-typical if $d(P_X^n, P_X) < \epsilon$.
We denote the set of such length $n$ typical sequences by 
$A^n_\epsilon(P_X)$.
\end{definition}

\begin{remark}
 For finite alphabets, it is interesting to compare the definition of weak* typicality
 to that of strong typicality. Specifically, if $|E_X|$ is finite and
 $P_\bfx \in A^n_\epsilon(P_X)$, then $|P_\bfx(a) - P_X(a)| < \delta$ for some $\delta > 0$ and 
 all $a \in E_X$, and $\delta \rightarrow 0$ as $\epsilon \rightarrow 0$. This
 coincides with the definition of strong typicality (except
 for the occasional requirement that $P_\bfx(a) = 0$ if $P_X(a) = 0$).
 Thus weak* typical sequences can be viewed as a generalization of strong typical sequences.
\end{remark}

Unless stated otherwise, in the sequel all typical sequences are weak* typical sequences.

We also find it convenient to introduce a notion of {\em asymptotically typical
sequences}. Given a set of sequences $\bfx^{n_1}, \bfx^{n_2}, \cdots$
of lengths $n_1 < n_2 < \cdots$, there
is a corresponding sequence of empirical distributions
$P_X^{n_1}, P_X^{n_2}, \cdots$. 
\begin{definition}
 We say that the sequence of sequences $\{\bfx^{n_k}\}$ is
asymptotically $P_X$-typical provided that the corresponding
sequence of empirical measures satisfies
\begin{align}
 P_X = \wlim_{k \rightarrow \infty} P_X^{n_k}.
\end{align}
\end{definition}

\begin{remark}
If a sequence of sequences $\{\bfx^{n_k}\}$ is
asymptotically $P_X$-typical, then for
any $\epsilon > 0$ there exists a $K$ such that for
all $k>K$, $\bfx^{n_k}$ is weak* $(P_X,\epsilon)$-typical. 
\end{remark}

It some cases it will be more convenient to first prove certain results
for asymptotically typical sequences of sequences, and then as a corollary infer
behavior of typical sequences for large length $n$.

Jointly weak* typical sequences are defined analogously. Specifically,
\begin{definition}
  Two sequences $\bfx = (x_1, \ldots, x_n)$ and $\bfy = (y_1, \ldots, y_n)$
are said to be weak* $(P_{XY},\epsilon)$-typical if $d(P_{\bfx, \bfy}, P_{XY}) < \epsilon$.
Similarly, we say that the empirical distribution $P_{XY}^n$
is weak* $(P_{XY},\epsilon)$-typical if $d(P_{XY}^n, P_{XY}) < \epsilon$.
We denote the set of such pairs of length $n$ weak* typical sequences by $A^n_\epsilon(P_{XY})$.
\end{definition}

Likewise, one can also define a sequence of a pair of sequences $\{(\bfx^{n_k}, \bfy^{n_k})\}$
to be asymptotically $P_{XY}$-typical in the obvious way.

\subsection{Consistency Properties}

There are several desirable properties that typical and jointly typical sequences
should possess.

First, a random $\iid$ sequence should be typical with high probability.
Second, a $(P_X,\epsilon)$-typical sequence $\bfx$ should have a cost
$\frac{1}{n} \sum_i g(x_i)$ close to $E_{P_X} [g(X)]$. Third,
if two sequences are jointly typical, then one would expect each sequence to be typical in its own right.

The following lemma shows that the first is indeed true for asymptotically typical sequences.

\begin{lemma} \label{lem:w*b}
%
 Let $X_1, X_2, \ldots$ be a sequence of independent random variables with values 
 in $E_X$ with identical distribution $P_X$ and $\{\bfX^{n_k}\}$ a corresponding sequence of sequences.
 Then almost surely $\{\bfX^{n_k}\}$ is asymptotically $P_X$-typical. 
 \label{lem:w*b:Varadarajan}
%
%
%
\end{lemma}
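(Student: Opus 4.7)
The goal is a version of Varadarajan's theorem: for i.i.d.\ samples drawn from $P_X$ on a Polish space, the empirical measures converge weakly* to $P_X$ almost surely. The plan is to combine the strong law of large numbers, applied one set at a time, with the countable generating field constructed in Corollary~\ref{cor:field}, and then upgrade set-wise convergence to weak* convergence via the Portemanteau theorem (Theorem~\ref{thm:portemanteau}).

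First, I would invoke Corollary~\ref{cor:field} with the limiting measure $P = P_X$ to obtain a countable field $\calF_{P_X}$ that generates $\calE_X$ and for which $P_X(\partial A) = 0$ for every $A \in \calF_{P_X}$. For each fixed $A \in \calF_{P_X}$, the indicator random variables $\1{X_\ell \in A}$ are i.i.d.\ Bernoulli with mean $P_X(A)$, so by the strong law of large numbers
\begin{align}
P_X^{n_k}(A) = \frac{1}{n_k}\sum_{\ell=1}^{n_k} \1{X_\ell \in A} \longrightarrow P_X(A) \quad \text{a.s.}
\end{align}
Because $\calF_{P_X}$ is countable, the countable union of the corresponding exceptional null sets is still null, and off a single $P$-null event one has $P_X^{n_k}(A) \to P_X(A)$ simultaneously for every $A \in \calF_{P_X}$.

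Next I would bridge from set-wise convergence on $\calF_{P_X}$ to weak* convergence. Recall from the construction in Lemma~\ref{lem:field} that the underlying family $\calA \subset \calF_{P_X}$ consists of open balls $B(x,r)$ such that every open subset of $E_X$ is a countable union of balls in $\calA$. For an arbitrary open $G$, write $G = \bigcup_{i \geq 1} A_i$ with $A_i \in \calA$, and for any $\epsilon > 0$ choose $K$ large enough that $P_X\bigl(\bigcup_{i \leq K} A_i\bigr) > P_X(G) - \epsilon$. Since finite unions of sets in $\calF_{P_X}$ remain in $\calF_{P_X}$, the set $\bigcup_{i \leq K} A_i$ is a $P_X$-continuity set and on the almost sure event above,
\begin{align}
\liminf_{k \to \infty} P_X^{n_k}(G) \geq \lim_{k \to \infty} P_X^{n_k}\Bigl(\bigcup_{i \leq K} A_i\Bigr) = P_X\Bigl(\bigcup_{i \leq K} A_i\Bigr) > P_X(G) - \epsilon.
\end{align}
Letting $\epsilon \downarrow 0$ gives $\liminf_k P_X^{n_k}(G) \geq P_X(G)$ for every open $G$, which by the open-set formulation of the Portemanteau theorem (equivalent to Theorem~\ref{thm:portemanteau}\ref{thm:pmt:sets}) yields $P_X = \wlim_k P_X^{n_k}$ on that event, as required.

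The routine ingredients, SLLN and countable intersection of null sets, are cheap; the step I expect to require the most care is the last one, going from convergence on the generating field $\calF_{P_X}$ to convergence on all $P_X$-continuity sets. The cleanest way to handle this, as above, is to exploit that $\calA$ contains a base of $P_X$-continuity balls (so arbitrary open sets are exhausted from within by finite unions already in $\calF_{P_X}$), which is exactly the property that motivated the somewhat delicate construction in Lemma~\ref{lem:field} in the first place.
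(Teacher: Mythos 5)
Your proof is correct, but it takes a deliberately different route from the paper: the paper's entire proof is a one-line citation to Varadarajan's theorem in \cite[Theorem 11.4.1]{Dudley:2003}, whereas you reprove that theorem from scratch. Your argument is in fact essentially the standard proof of Varadarajan's result, but phrased in terms of the paper's own machinery: you obtain set-wise almost-sure convergence on the countable $P_X$-continuity field $\calF_{P_X}$ of Corollary~\ref{cor:field} via the strong law of large numbers (one null set per set, countably many sets), and then upgrade to weak* convergence by exhausting an arbitrary open set $G$ from within by finite unions of balls from $\calA \subset \calF_{P_X}$ and applying the $\liminf$-on-open-sets form of the Portemanteau theorem. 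The bridging step is the only place requiring care and you handle it correctly; it is the same device the paper itself uses later in the proof of Lemma~\ref{lem:aml}, where it cites \cite[Chapter 1, Theorem 2.2]{Billingsley:1999} to pass from rectangle-wise convergence to weak* convergence. The trade-off is clear: the paper's version is maximally short but entirely delegated to an external reference, while yours is self-contained, exhibits concretely why Corollary~\ref{cor:field} is the right tool even for this foundational step, and makes the later proof of Lemma~\ref{lem:aml} read as a parallel construction rather than a new technique. Both are valid; yours is arguably more in the spirit of the paper's stated goal of avoiding ``delicate'' arguments that are merely handwaved.
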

\begin{proof}
This is a direct restatement of Varadarajan's Theorem \cite[Theorem 11.4.1]{Dudley:2003}.
\end{proof}

We now show that all three statements are true for weak* typical sequences,
where the first is a consequence of the result for asymptotically typical sequences.

\begin{theorem} \label{thm:w*b}
 The following hold.
\begin{enumerate}
  
 \item Let $X_1, X_2, \cdots, X_n$ be independent random variables with values in $E_X$ with identical distribution
 $P_X$. Then for any $\epsilon > 0$,
 \begin{align}
  \lim_{n \rightarrow \infty} P\left( \bfX^n \in A_\epsilon^n(P_X) \right) = 1.
 \end{align}

 \label{thm:w*b:Varadarajan}
 
  \item For every $\delta > 0$, there is an $\bar{\epsilon}(\delta) > 0$ such 
       that if $\bfx \in A^n_{\bar{\epsilon}(\delta)}(P_X)$ then
   \begin{align}
   \left| E_{P_\bfx} [g(X)] - E_{P_X}[g(X)] \right| < \delta.
   \end{align}
 \label{thm:w*b:cost}
 
 \item For any $\epsilon > 0$, there is an $\bar{\epsilon}(\epsilon) > 0$ such 
       that if $(\bfx, \bfy) \in A^n_{\bar{\epsilon}(\epsilon)}(P_{XY})$ then
       $\bfx \in A^n_{\epsilon}(P_X)$.

 \label{thm:w*b:marginal}

\end{enumerate}
\end{theorem}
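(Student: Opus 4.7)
The plan is to prove each of the three claims by exploiting continuity, on $\calM_1(E)$ endowed with the chosen weak* metric $d(\cdot,\cdot)$, of a natural operation: the identity map (item \ref{thm:w*b:Varadarajan}), integration against a bounded continuous function (item \ref{thm:w*b:cost}), and marginalisation (item \ref{thm:w*b:marginal}). For item \ref{thm:w*b:Varadarajan}, I would specialise Lemma \ref{lem:w*b} to the sequence of lengths $n_k = k$. The lemma asserts that almost surely $\wlim_n P_{\bfX^n} = P_X$, and since $d$ metrises the weak* topology, this is equivalent to $d(P_{\bfX^n}, P_X) \to 0$ almost surely, which in turn implies convergence in probability. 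Hence $P(d(P_{\bfX^n}, P_X) \geq \epsilon) \to 0$ for every $\epsilon > 0$, which is exactly the claim.

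For item \ref{thm:w*b:cost}, by condition \ref{thm:pmt:bc} of Theorem \ref{thm:portemanteau}, the functional $\Phi_g(P) \eqdef \int g\, dP$ is weak* continuous on $\calM_1(E_X)$ whenever $g$ is bounded and continuous, and so is in particular continuous at $P_X$ in the metric $d$: for every $\delta > 0$ there exists an $\bar\epsilon(\delta) > 0$ such that $d(P, P_X) < \bar\epsilon(\delta)$ implies $|\Phi_g(P) - \Phi_g(P_X)| < \delta$. Taking $P = P_\bfx$ and noting that $\Phi_g(P_\bfx) = E_{P_\bfx}[g(X)]$ yields the bound.

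For item \ref{thm:w*b:marginal}, I would first establish that the marginalisation map $\pi_X : \calM_1(E_X \times E_Y) \to \calM_1(E_X)$, sending a joint measure to its $X$-marginal, is weak* continuous. Indeed, for any bounded continuous $f : E_X \to \reals$ the lifted function $\tilde f(x,y) \eqdef f(x)$ is bounded and continuous on $E_X \times E_Y$, and $\int f\, d(\pi_X P) = \int \tilde f\, dP$, so by condition \ref{thm:pmt:bc} of Theorem \ref{thm:portemanteau}, weak* convergence of a sequence of joint measures forces weak* convergence of the $X$-marginals. Continuity of $\pi_X$ at $P_{XY}$ in the metric $d$ then supplies the required $\bar\epsilon(\epsilon) > 0$: if $d(P_{\bfx,\bfy}, P_{XY}) < \bar\epsilon(\epsilon)$, then $d(\pi_X P_{\bfx,\bfy}, P_X) < \epsilon$, and a direct check from the definitions shows $\pi_X P_{\bfx,\bfy} = P_\bfx$, completing the argument. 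The only mild subtlety throughout is the passage from topological continuity to a quantitative $\epsilon$--$\bar\epsilon$ statement, but this is immediate because $d$ is an honest metric inducing the weak* topology rather than merely a convergence class; no substantial obstacle is anticipated.
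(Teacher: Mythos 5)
Your proof is correct and takes essentially the same approach as the paper: part (i) rests on Lemma \ref{lem:w*b} (Varadarajan), and parts (ii) and (iii) rest on the weak* continuity of the relevant functionals, made quantitative via the metric $d$. The only minor variations are cosmetic: for part (i) you pass directly from almost-sure to in-probability convergence while the paper argues by contradiction; for part (iii) you establish continuity of the marginalisation map via lifted bounded continuous functions (condition \ref{thm:pmt:bc} of Theorem \ref{thm:portemanteau}), whereas the paper invokes condition \ref{thm:pmt:sets} (null-boundary sets) — both are valid routes to the same conclusion.
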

\begin{proof}
 i) Otherwise one could find a sequence of sequences $\{\bfX^{n_k}\}$ in 
Lemma \ref{lem:w*b} that would not be asymptotically typical almost surely. 

ii) Let $M_X$ be any (not necessarily empirical)
measure on $E_X$. By part \ref{thm:pmt:bc} of Theorem \ref{thm:portemanteau}, there is
an $\bar{\epsilon}(\delta)$ such that $M_X \in B(P_X, \bar{\epsilon}(\delta))$ implies
   \begin{align}
   \left| E_{M_X} [g(X)] - E_{P_X}[g(X)] \right| < \delta.
   \end{align}
The result follows since $\bfx \in A^n_{\bar{\epsilon}(\delta)}(P_X)$ iff $P_{\bfx} \in B(P_X, \bar{\epsilon}(\delta))$.

iii) By part \ref{thm:pmt:sets} of Theorem \ref{thm:portemanteau}, if $M^k_{XY}$ is a sequence
of (not necessarily empirical) 
measures in $E_{XY}$ with marginals $M^k_X$ and such that $\lim_k d(M^k_{XY}, P_{XY}) = 0$,
then $\lim_k d(M_X^k, P_X) = 0$. Thus for each $\epsilon$, there is an $\bar{\epsilon}(\epsilon)$
such that $d(M_{XY}, P_{XY}) < \bar{\epsilon}(\epsilon)$ implies $d(M_X, P_X) < \epsilon$. 
The result again follows since $(\bfx, \bfy) \in A^n_{\bar{\epsilon}(\epsilon)}(P_{XY})$ iff 
$P_{\bfx,\bfy} \in B(P_{XY}, \bar{\epsilon}(\epsilon))$.
\end{proof}

An important desirable property of typical sequences is that if a 
typical sequence is input to a channel then the input and
output should be jointly typical in some sense. We have the following
theorem.
\begin{lemma}
 \label{lem:aml}
 Let the sequence of sequences $\{\bfx^{n_k}\}$ be asymptotically
 $P_X$-typical and consider the output sequences $\bfY^{n_k}$ generated
 by the stationary memoryless channel $W_{Y|X}(\cdot|x)$. Then
 the sequence of sequences $\{(\bfx^{n_k}, \bfY^{n_k})\}$ is
 asymptotically $P_X \otimes W_{Y|X}$-typical almost surely.
\end{lemma}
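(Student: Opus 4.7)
\emph{Plan.} My goal is to show that the joint empirical measures $P^{n_k}_{\bfx^{n_k}, \bfY^{n_k}}$ converge weakly* to $P_X \otimes W_{Y|X}$ almost surely. The plan is to test weak* convergence against a countable convergence-determining family of bounded continuous functions, splitting each resulting average into a deterministic ``signal'' term (handled by asymptotic typicality of the inputs) and a random ``noise'' term (handled by Hoeffding plus Borel--Cantelli).

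First, since $E_X \times E_Y$ is Polish, I would fix a countable family $\{f_m\}_{m \geq 1}$ of bounded continuous functions on $E_X \times E_Y$ that is convergence-determining for the weak* topology on $\calM_1(E_X \times E_Y)$; such a family (for example, bounded Lipschitz functions associated with a countable base) is a standard construction. For each $m$, define
\begin{align}
\phi_m(x) \eqdef \int_{E_Y} f_m(x,y)\, W_{Y|X}(dy|x).
\end{align}
Then $|\phi_m| \leq \|f_m\|_\infty$, and $\phi_m$ is continuous in $x$: if $x_n \to x$, Definition \ref{def:channel} yields $W_{Y|X}(\cdot|x_n) \to W_{Y|X}(\cdot|x)$ weakly*, and combining this with uniform continuity of $f_m$ on compact subsets of $E_X$ together with tightness of a weakly* convergent sequence of measures yields $\phi_m(x_n) \to \phi_m(x)$.

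Next, for each fixed $m$ and $k$, I would decompose
\begin{align}
\frac{1}{n_k}\sum_{\ell=1}^{n_k} f_m\bigl(x_\ell^{n_k}, Y_\ell^{n_k}\bigr) = \frac{1}{n_k}\sum_{\ell=1}^{n_k}\phi_m\bigl(x_\ell^{n_k}\bigr) + \frac{1}{n_k}\sum_{\ell=1}^{n_k}\Bigl[f_m\bigl(x_\ell^{n_k}, Y_\ell^{n_k}\bigr) - \phi_m\bigl(x_\ell^{n_k}\bigr)\Bigr].
\end{align}
The first (deterministic) summand equals $E_{P_X^{n_k}}[\phi_m]$, which tends to $E_{P_X}[\phi_m] = \int f_m\, d(P_X \otimes W_{Y|X})$ by the asymptotic $P_X$-typicality of $\{\bfx^{n_k}\}$ together with Theorem \ref{thm:portemanteau}(\ref{thm:pmt:bc}), since $\phi_m$ is bounded continuous. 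The second summand is an average of $n_k$ independent, zero-mean random variables (given the deterministic inputs $\bfx^{n_k}$), each bounded in absolute value by $2\|f_m\|_\infty$; Hoeffding's inequality bounds the probability that it exceeds any $\epsilon>0$ by $2\exp\bigl(-n_k \epsilon^2/(2\|f_m\|_\infty^2)\bigr)$, and since $n_k \geq k$ this is summable in $k$, so Borel--Cantelli gives almost sure convergence to zero. A countable intersection over $m$ (and over a countable set of thresholds $\epsilon$) then produces a single almost sure event on which $\int f_m\, dP^{n_k}_{\bfx^{n_k}, \bfY^{n_k}} \to \int f_m\, d(P_X \otimes W_{Y|X})$ for every $m$ simultaneously, yielding asymptotic $P_X \otimes W_{Y|X}$-typicality.

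The main obstacle is the continuity claim for $\phi_m$: it is precisely here that the weak* continuity hypothesis on the channel (Definition \ref{def:channel}) is essential, and one cannot simply invoke Theorem \ref{thm:portemanteau}(\ref{thm:pmt:bc}) with a fixed integrand since the relevant integrand $f_m(x_n,\cdot)$ itself varies with $n$; a small joint-continuity argument is required. Everything else (the concentration bound, the signal term, and the countable-union step) is routine.
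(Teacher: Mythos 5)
Your proposal is correct, and it takes a genuinely different route from the paper's own proof. The paper works at the level of sets: it picks the special countable generating fields $\calF_X, \calF_Y$ from Corollary~\ref{cor:field} (whose members have $P$-null boundaries), proves almost-sure convergence of $P^k_{XY}(A\times B)$ for each rectangle by splitting into a deterministic term (controlled by Lemma 2 of Csisz\'ar, i.e.\ $P_X^k \to P_X$ weakly* implies $P_X^k \otimes W_{Y|X} \to P_X\otimes W_{Y|X}$ weakly*) plus a zero-mean bounded sum (controlled by a strong law for independent variables), and then invokes Billingsley's convergence-determining-class theorem for rectangles. You instead work at the level of test functions: you take a countable convergence-determining family $\{f_m\}$ of bounded continuous functions, push the channel into the test function by defining $\phi_m(x)=\int f_m(x,y)\,W_{Y|X}(dy|x)$, and split the empirical integral into $E_{P_X^{n_k}}[\phi_m]$ (handled by asymptotic typicality of the inputs and continuity of $\phi_m$) plus a bounded martingale-difference average (handled by Hoeffding and Borel--Cantelli). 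The two decompositions are morally the same, but your version avoids constructing the boundary-null fields and avoids citing Csisz\'ar's Lemma 2; in exchange you must prove the joint-continuity claim for $\phi_m$ yourself, which you correctly identify as the crux. Your sketch for that step is sound once you say ``uniform continuity on compact subsets of $E_X \times E_Y$'' rather than ``of $E_X$'': fix $\epsilon>0$, use tightness of the weakly* convergent family $\{W_{Y|X}(\cdot|x_n)\}\cup\{W_{Y|X}(\cdot|x)\}$ to get a compact $K_Y\subset E_Y$ carrying all but $\epsilon$ of the mass uniformly in $n$, split $\phi_m(x_n)-\phi_m(x)$ into $\int[f_m(x_n,\cdot)-f_m(x,\cdot)]\,dW_{Y|X}(\cdot|x_n)$ plus $\int f_m(x,\cdot)\,d[W_{Y|X}(\cdot|x_n)-W_{Y|X}(\cdot|x)]$, kill the second term by Definition~\ref{def:channel}, and kill the first using uniform continuity of $f_m$ on the compact $(\{x_n\}\cup\{x\})\times K_Y$ plus the $2\|f_m\|_\infty\epsilon$ tail bound. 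One small bookkeeping point: Borel--Cantelli needs $\sum_k \exp(-c\,n_k)<\infty$, which holds because the $n_k$ are strictly increasing integers so $n_k\geq k$. Overall this is a valid, arguably cleaner, alternative proof.
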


\begin{remark}
Consider the Markov chain $X - Y - Z$.
Then if a sequence of sequences $\{(\bfx^n, \bfy^n)\}$
is asymptotically $P_{XY}$-typical and
is used to generate output sequences $\bfZ^n$ according
to the channel $W_{Z|XY}(\cdot|x,y) = W_{Z|Y}(\cdot|y)$,
then the sequences $\{(\bfx^n, \bfy^n, \bfZ^n)\}$
are almost surely asymptotically $P_{XY} \otimes W_{Z|Y}$-typical,
i.e., the Markov Lemma holds for asymptotically typical sequences.
\end{remark}

\begin{proof}
 For ease of notation, we consider the case $n_k = k$.
 Let $P^k_X$ denote the empirical distribution of the $k$-length
sequence $\bfx^k$. Let $P_Y$ denote the marginal $P_X W_{Y|X}$
and let $P_{XY}^k$ denote the empirical distribution of the pair
of sequences $\bfx^k$ and $\bfY^k$. 

The outline of the proof is as follows. First, for each of
$P_X$ and $P_Y$, consider two fields $\calF_X$ and $\calF_Y$
as described in Corollary \ref{cor:field}.
We will first show that almost surely for all $A \in \calF_X$, $B \in \calF_Y$,
$\lim_k P_{XY}^k(A \times B) = P_X \otimes W_{Y|X}(A \times B)$.
Thus, by \cite[Chapter 1, Theorem 2.2]{Billingsley:1999},
$P_X \otimes W_{Y|X} = \wlim_k P_{XY}^k$ since each open set of
$E_X \times E_Y$ is a countable union of rectangular sets $A \times B$,
$A \in \calF_X$, $B \in \calF_Y$.


Now, consider a set $A \in \calF_X$ and $B \in \calF_Y$ and observe that
\begin{align}
 & \left| P_{XY}(A \times B) - P_{XY}^k(A\times B) \right| \nonumber \\
 & \quad \leq \left| P_{XY}(A\times B) - P^k_X \otimes W_{Y|X} (A \times B) \right|
    + \left| P^k_X \otimes W_{Y|X} (A \times B) - P^k_{XY} (A \times B) \right|,
 \label{eq:cons1}
\end{align}
where $P_{XY} = P_X \otimes W_{Y|X}$.
From Lemma 2 of \cite{Csiszar:1992}, since $P_X = \wlim_k P_X^k$,
then $P_X \otimes W_{Y|X} = \wlim_k P_X^k \otimes W_{Y|X}$.
Since 
$P_X \otimes W_{Y|X}(\partial(A \times B)) \leq P_X(\partial A) + P_Y(\partial B) = 0$,
it follows that $\lim_k P_X^k \otimes W_{Y|X}(A \times B) = P_X \otimes W_{Y|X}(A \times B)$
and thus the first term on the right side of \eqref{eq:cons1} is 0 in the limit.

As for the second term on the right side of \eqref{eq:cons1}, we note that
\begin{align}
 P^k_X \otimes W_{Y|X} (A \times B) - P^k_{XY} (A \times B)
 &= \frac{1}{k} \sum_{i=1}^k \1{x_i \in A} \left[ W_{Y|X}(B|x_i) - \1{Y_i \in B} \right]
\end{align}
Let $Z_i = \1{x_i \in A} \left[ W_{Y|X}(B|x_i) - \1{Y_i \in B} \right]$. Then,
the second term on the right of \eqref{eq:cons1} is
\begin{align}
\left| \frac{1}{k} \sum_{i=1}^k Z_i \label{eq:cons_sum} \right|.
\end{align}
Note that given the non-random sequence $\bfx^k$, i) the random variables $Z_i$ are independent, 
ii) $E[Z_i] = 0$, and iii) $\sup_i var[Z_i] \leq 1$ since $-1 \leq Z_i \leq 1$.
Then by \cite[Theorem 5.29]{Klenke:2008}, the sum in \eqref{eq:cons_sum}
converges to 0 almost surely, i.e.,
\begin{align}
 \lim_{k \rightarrow \infty} \left| P^k_X \otimes W_{Y|X} (A \times B) - P^k_{XY} (A \times B) \right| = 0
\end{align}
almost surely. Since the family of sets $\calF_X$ and $\calF_Y$ are countable,
it follows that almost surely the right side of \eqref{eq:cons1} vanishes 
as $k \rightarrow \infty$ for all $A \in \calF_X$ and $B \in \calF_Y$.

\end{proof}

\begin{theorem}
\label{thm:aml}
Let $\bfx^n$ be an input sequence to a stationnary memoryless channel $W_{Y|X}(\cdot|x)$
and let $\bfY^n$ be the corresponding output sequence.
For every $\epsilon > 0$ and $\delta > 0$, there exists an $\bar{\epsilon}(\epsilon,\delta)$ 
such that if $\bfx^n \in A^n_{\bar{\epsilon}(\epsilon, \delta)}(P_X)$ for all $n$ greater than some $N$, then
\begin{align}
 \liminf_{n \rightarrow \infty} P\left( (\bfx^n, \bfY^n) \in A^n_\epsilon(P_X \otimes W_{Y|X}) \right) > 1 - \delta. 
\end{align}
\end{theorem}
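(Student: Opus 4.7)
The plan is to derive Theorem \ref{thm:aml} from Lemma \ref{lem:aml} by a contradiction plus diagonal extraction argument. This seems natural because Lemma \ref{lem:aml} already delivers the conclusion asymptotically for any sequence of sequences whose empirical distributions converge weakly* to $P_X$; what is missing from the theorem statement is simply a uniform version of the same fact. So the only real work is to convert the failure of uniformity into the existence of a ``bad'' diagonal subsequence that would contradict Lemma \ref{lem:aml}.

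Concretely, I would assume toward contradiction that there exist $\epsilon > 0$ and $\delta > 0$ for which no $\bar{\epsilon}(\epsilon,\delta) > 0$ works. Applying this failure with $\bar{\epsilon} = 1/k$ for each integer $k \geq 1$, one can extract, for every $k$, some threshold $N_k$ and an input sequence of sequences $\{\bfx^n_{(k)}\}_{n > N_k}$ such that $\bfx^n_{(k)} \in A^n_{1/k}(P_X)$ for all $n > N_k$, yet
\[
 \liminf_{n \to \infty} P\bigl((\bfx^n_{(k)}, \bfY^n) \in A^n_\epsilon(P_X \otimes W_{Y|X})\bigr) \leq 1 - \delta.
\]
Using the definition of $\liminf$, I can then pick indices $n_k$ with $n_k > \max(n_{k-1}, N_k)$ such that at the single index $n_k$ the above probability is at most $1 - \delta/2$. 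Setting $\tilde{\bfx}^{n_k} := \bfx^{n_k}_{(k)}$ produces one diagonal sequence of sequences; since $d(P_{\tilde{\bfx}^{n_k}}, P_X) < 1/k \to 0$ and $d$ metrizes the weak* topology, the diagonal $\{\tilde{\bfx}^{n_k}\}$ is asymptotically $P_X$-typical in the sense of the previous subsection.

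Applying Lemma \ref{lem:aml} to this diagonal sequence now yields that almost surely $\{(\tilde{\bfx}^{n_k}, \bfY^{n_k})\}$ is asymptotically $P_X \otimes W_{Y|X}$-typical, i.e., $d(P_{\tilde{\bfx}^{n_k}, \bfY^{n_k}}, P_X \otimes W_{Y|X}) \to 0$ almost surely. Consequently, for every fixed $\epsilon > 0$, the event $(\tilde{\bfx}^{n_k}, \bfY^{n_k}) \in A^{n_k}_\epsilon(P_X \otimes W_{Y|X})$ holds for all sufficiently large $k$ almost surely, and by bounded convergence its probability tends to $1$ as $k \to \infty$. This directly contradicts the construction, which forces this probability to be at most $1 - \delta/2$ for every $k$, completing the proof.

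The main obstacle, such as it is, is bookkeeping rather than mathematical: one must keep the two quantifiers (the parameter $k$ indexing the candidate tolerance and the length parameter $n$ within each candidate sequence) properly separated while constructing the diagonal, and must verify that the resulting single sequence of sequences satisfies the weak* convergence hypothesis needed by Lemma \ref{lem:aml}. Both points follow immediately from the choice $\bar{\epsilon} = 1/k$ and the definition of $\liminf$. No additional large deviations or measure theoretic machinery beyond Lemma \ref{lem:aml} itself is required.
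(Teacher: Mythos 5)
Your proposal is correct and takes essentially the same approach as the paper: a proof by contradiction that extracts a diagonal sequence of sequences, verifies it is asymptotically $P_X$-typical because the associated tolerances $1/k \to 0$, and then invokes Lemma~\ref{lem:aml} to reach a contradiction. Your write-up is slightly more careful than the paper's in one respect — you make the passage from almost-sure asymptotic typicality to convergence of the probabilities to $1$ explicit via bounded convergence, a step the paper leaves implicit when it asserts the contradiction.
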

\begin{proof}
 Suppose that for a given $\epsilon$ and $\delta$, no such $\bar{\epsilon}(\epsilon, \delta)$ can
be found. Then, we can find a sequence of sequences $\{\bfx^{n_k}\}$ with corresponding empirical measures
$P^{n_k}_X$ for increasing $n_k$ such that $d(P^{n_k}_X, P_X)<1/k$, i.e., $P_X = \wlim_k P^{n_k}_X$ and
\begin{align}
 \liminf_{k \rightarrow \infty} P\left[(\bfx^{n_k}, \bfY^{n_k}) \in A^{n_k}_\epsilon(P_X \otimes W_{Y|X}) \right] \leq 1-\delta.
\end{align}
But this contradicts the almost sure asymptotic $(P_X \otimes W_{Y|X})$-typicality 
of $(\bfx^{n_k}, \bfY^{n_k})$ in Lemma \ref{lem:aml}.
\end{proof}

\subsection{Large Deviations}

The next theorems provide some large deviations results for weak* typical
sequences. The first theorem looks at the probability of a random \iid
sequence drawn according to a law $P$ to be weak* $M$-typical. This
is shown to be $\approx 2^{-nD(M||P)}$ which is the same result as for
weak and strong typical sequences. For example, let $P_{XY}$ be the joint law
for $E_X \times E_Y$ with marginals $P_X$ and $P_Y$. Then the probability that
a pair of sequences $\bfX$ and $\bfY$ drawn according to $P_X \otimes P_Y$
is $P_{XY}$-typical is $\approx 2^{-nI(X;Y)}$.

The next two theorems then consider the more special case of when the sequence
$\bfy$ is non-random and known to be weak* typical but $\bfX$ is random.
There, we again show that the probability that the pair of sequences is
weak* typical is $\approx 2^{-nI(X;Y)}$. This result is normally proved
for strong typical sequences using the notion of conditional strongly typical sequences
and no analog exists in general for weak typical sequences.

\begin{theorem}
 \label{thm:dev_rand}
 Let $P$ and $M$ be measures on a common probability space $(E, \calE)$ and let 
a random sequence $\bfX^n$ be chosen \iid according to the law $\calL(X_i)=P$, i.e., 
draw the sequence $\bfX^n$ according to the measure $\mu^n = \otimes_{i=1}^n P$. 
Define the sequence of probabilities $a_n = \mu^n(\bfX^n \in A^n_\epsilon(M))$,
i.e., the probability that the drawn sequence is weak* $(M,\epsilon)$-typical.
If $D(M||P)$ is finite, then there is an $\epsilon(\delta)>0$ such that for all $\epsilon<\epsilon(\delta)$
\begin{align}
 -D(M || P) \leq \liminf \frac{1}{n} \log a_n \leq \limsup \frac{1}{n} \log a_n \leq -D(M || P) + \delta.
 \label{eq:dev_rand}
\end{align}
If $D(M || P) = \infty$, then for each $L > 0$ there is an $\bar{\epsilon}(L)$ such that
for $\epsilon < \bar{\epsilon}(L)$, the right side of \eqref{eq:dev_rand} is $-L$.
\end{theorem}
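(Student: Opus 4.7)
I would prove the two inequalities in \eqref{eq:dev_rand} by completely different arguments. The upper bound I obtain by a quantization argument that reduces matters to the finite-alphabet Sanov bound through a finite partition of $E$ built from the special field $\calF_M$ of Corollary~\ref{cor:field}. The lower bound I obtain by a change of measure combined with Jensen's inequality and the $L^1$ law of large numbers, with the needed convergence of $M^n$-probabilities coming from Theorem~\ref{thm:w*b}(\ref{thm:w*b:Varadarajan}) applied under $M$.

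\textbf{Upper bound.} Using Lemma~\ref{lem:Gray} applied to the field $\calF_M$, I would choose a finite measurable partition $\calQ = \{Q_1,\ldots,Q_m\}$ of $E$ with each $Q_j \in \calF_M$ and $H_{M||P}(\calQ) \geq D(M||P) - \delta/2$; in the case $D(M||P) = \infty$, I instead choose $\calQ$ with $H_{M||P}(\calQ) \geq L+1$. Because $M(\partial Q_j) = 0$ for every $j$, part~\ref{thm:pmt:sets} of Theorem~\ref{thm:portemanteau} together with a contradiction argument yields an $\epsilon(\delta) > 0$ such that $\bfx \in A^n_{\epsilon(\delta)}(M)$ forces $|P_\bfx(Q_j) - M(Q_j)| < \eta$ for all $j$, where $\eta$ can be made as small as desired. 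I then bin sequences by their $\calQ$-type $(nP_\bfx(Q_1),\ldots,nP_\bfx(Q_m))$; there are at most $(n+1)^m$ types, and any type $T$ appearing in $A^n_{\epsilon(\delta)}(M)$ satisfies $|T(Q_j) - M(Q_j)| < \eta$. The $P^n$-probability of the type-$T$ class is bounded by $2^{-n H_{T||P|_\calQ}(\calQ)}$ by the finite-alphabet method of types, and continuity of $H_{\cdot||P|_\calQ}(\calQ)$ at $M|_\calQ$ makes this at most $2^{-n(D(M||P) - \delta)}$ for $\eta$ small enough. Summing over the polynomially many types gives $\limsup_n n^{-1}\log a_n \leq -D(M||P) + \delta$, and analogously $\leq -L$ in the infinite case.

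\textbf{Lower bound.} Assume $D(M||P) < \infty$, so $M \ll P$, and set $f := dM/dP$. I first note $\log f \in L^1(M)$: the elementary bound $-x\log x \leq e^{-1}$ on $(0,1]$ gives
\begin{align*}
E_M[(\log f)^-] \;=\; \int_{\{f < 1\}} f\log(1/f)\, dP \;\leq\; e^{-1},
\end{align*}
whence $E_M[(\log f)^+] = D(M||P) + E_M[(\log f)^-] < \infty$. Absolute continuity then gives
\begin{align*}
a_n \;=\; \int_{A^n_\epsilon(M)} \prod_{i=1}^n dP(x_i) \;\geq\; \int_{A^n_\epsilon(M)} \prod_{i=1}^n \frac{1}{f(x_i)} \, dM^n(\bfx).
\end{align*}
Applying Jensen's inequality to the concave function $\log$ against the probability measure $\mathbf{1}_{A^n_\epsilon(M)}\, dM^n / M^n(A^n_\epsilon(M))$ yields
\begin{align*}
\frac{1}{n}\log a_n \;\geq\; \frac{1}{n}\log M^n(A^n_\epsilon(M)) \;-\; E_{M^n}\!\left[\frac{1}{n}\sum_{i=1}^n \log f(X_i) \;\Big|\; \bfX^n \in A^n_\epsilon(M)\right].
\end{align*}
By Theorem~\ref{thm:w*b}(\ref{thm:w*b:Varadarajan}) applied to i.i.d.\ draws from $M$, $M^n(A^n_\epsilon(M)) \to 1$; the $L^1$ law of large numbers gives $n^{-1}\sum_i \log f(X_i) \to E_M[\log f] = D(M||P)$ in $L^1(M^n)$, so the conditional expectation tends to $D(M||P)$, yielding $\liminf_n n^{-1}\log a_n \geq -D(M||P)$.

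\textbf{Main obstacle.} The hard part is the uniform translation of weak*-closeness of $P_\bfx$ to $M$ into closeness of the partition probabilities $P_\bfx(Q_j)$ to $M(Q_j)$; this is precisely what the special field $\calF_M$ of Corollary~\ref{cor:field} was engineered to enable via the null-boundary property $M(\partial Q_j) = 0$ and the Portmanteau theorem. A secondary subtlety is verifying $\log f \in L^1(M)$ under the hypothesis $D(M||P) < \infty$ alone, handled via the $-x\log x \leq e^{-1}$ bound above. With these in hand, the finite-alphabet Sanov bound and the $L^1$ LLN together deliver both sides of \eqref{eq:dev_rand}.
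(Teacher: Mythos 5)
Your proof is correct, but it takes a genuinely different route for the lower bound. The paper obtains $\liminf \frac{1}{n}\log a_n \geq -D(M\|P)$ by directly invoking the abstract Sanov theorem in the weak* topology (Corollary 6.2.3 and Lemma 6.2.13 of Dembo and Zeitouni), which identifies the rate function as $D(\cdot\|P)$ and bounds the infimum over the open ball $B(M,\epsilon)$ by evaluating at $\nu=M$. You instead use a self-contained change-of-measure argument: after verifying $\log f\in L^1(M)$ via the $-x\log x\le e^{-1}$ bound, you lower-bound $a_n$ by $\int_{A^n_\epsilon(M)}\prod_i f(x_i)^{-1}\,dM^n$, apply Jensen under the conditional law $M^n(\cdot\,|\,A^n_\epsilon(M))$, and then combine Lemma~\ref{lem:w*b}/Theorem~\ref{thm:w*b}(i) applied to i.i.d.\ draws from $M$ (which gives $M^n(A^n_\epsilon(M))\to 1$) with the $L^1$ law of large numbers for $\frac1n\sum_i\log f(X_i)$. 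This trades the external Sanov citation for a short elementary argument, and has the minor virtue of making the lower bound independent of the large-deviations machinery entirely. The upper bound in both proofs is the same quantization argument through the null-boundary field $\calF_M$ of Corollary~\ref{cor:field}; you appeal to the method of types directly where the paper cites the finite-alphabet Sanov bound (Theorem 2.1.10 of Dembo and Zeitouni), which is the same content. Your treatment of the $D(M\|P)=\infty$ case is compressed to the word ``analogously''; the paper spells out the two sub-cases ($Q_M\ll Q_P$ or not), and if you were to fill in that step you would need the same dichotomy: either the induced finite divergence exceeds $2L$ and the continuity argument applies, or $M$ charges an atom that $P$ does not, in which case the relevant type classes have $P^n$-probability zero for $\eta$ small.
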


\begin{remark}
 \label{rem:delta}
 Under these assumptions, it follows that for any $\bar{\delta} > 0$, and $\delta > 0$,
there is a sufficiently large $N$ such that for all $n > N$,
\begin{align}
 \mu^n(\bfX^n \in A^n_{\epsilon}(M)) \leq 2^{n(-D(M||P) +\delta + \bar{\delta})}.
\end{align}
Since both $\delta$ and $\bar{\delta}$ are arbitrary, they can be absorbed
into a single ``$\delta$'' term.
\end{remark}

\begin{proof}
 Let $P_X^n$ be the empirical measure of the drawn sequence.
 We first show the lower bound. By the definition of weak* typicality, we recognize that
\begin{align}
a_n = \mu^n(\bfX^n \in A^n_\epsilon(M)) = \mu^n( P^n_X \in B(M, \epsilon)),
\end{align}
where $B(M, \epsilon)$ is an open ball in the space $\calM_1(E)$ in the weak* topology.
Then, by Sanov's Theorem (Corollary 6.2.3 of \cite{Dembo:1998}) we have the large deviations
principle 
\begin{align}
  -\inf_{\nu \in B(M, \epsilon)} \Lambda^*(\nu) \leq \liminf \frac{1}{n} \log \mu^n( P^n_X \in B(M, \epsilon)).
\end{align}
The lower bound then follows since in the weak* topology
$\Lambda^*(\nu) = D(\nu || P)$ (Lemma 6.2.13 of \cite{Dembo:1998}) and 
the $\inf$ is bounded by selecting any choice of $\nu \in B(M, \epsilon)$, say $\nu = M$.

To prove the upper bound, we use a quantization argument.
Consider a field $\calF_M$ as described in Corollary \ref{cor:field}
for $M$. 

If $D(M||P)$ is finite, then $M \ll P$, and
for any $\delta>0$, choose a sufficiently fine finite partition $\calQ \subset \calF_M$ of $E$ such that
the induced discrete probabilities $Q_P$ and $Q_M$ of $P$ and $M$ on the atoms of $\calQ$
satisfy\footnotemark
\begin{align}
 D(Q_M || Q_P) \geq D(M || P) - \delta/2.
\end{align}
\footnotetext{Since $D(M||P)$ is finite, this is straightforward by Lemma \ref{lem:Gray}.} 
Otherwise, $D(M||P) = \infty$ and for each $L > 0$, we can find a partition $\calQ$ such that
$D(Q_M || Q_P) > 2L$.


In either case, denote these atoms by $A_1, \ldots, A_K$ for some integer $K$.
Let $Q^n_P$ be the induced discrete probability of the empirical measure $P_X^n$ on the 
atoms of $\calQ$. Then, the event $P_X^n \in B(M, \epsilon)$ implies that the
discrete probabilities $|Q_M(A_k) - Q^n_P(A_k)| < \delta_1$
for all atoms and $\delta_1 \rightarrow 0$ as $\epsilon \rightarrow 0$
by weak* convergence since $M(\partial A_k) = 0$.

Now, $Q^n_P$ is an empirical probability for a random sequence over a finite alphabet. Let
$\Gamma$ be the set of all probability distributions $Q_P$ on the atoms of $\calQ$
such that $|Q_M(A_k) - Q_P(A_k)| \leq \delta_1$ for all $k$. In the finite alphabet case 
we have the following well-known large deviations result (\cite[Theorem 2.1.10]{Dembo:1998})
\begin{align}
 \limsup \frac{1}{n} \log \mu^n( P_X^n \in B(M, \epsilon)) \leq 
 \limsup \frac{1}{n} \log \mu^n( Q^n_P \in \Gamma) \leq - \inf_{Q_\nu \in \Gamma} D(Q_\nu || Q_P).
\end{align}
If $D(M||P)$ is finite, then $M \ll P$, and the divergence $D(Q_\nu || Q_P)$ is continuous on the 
compact set of $Q_\nu$
such that $Q_\nu \ll Q_P$ which includes $Q_M$ and $D(Q_\nu || Q_P)$ 
is infinite otherwise. Thus for small enough $\delta_1 > 0$, the 
$\inf$ can be bounded by $D(Q_M || Q_P) - \delta_2$ where $\delta_2 \rightarrow 0$ as $\delta_1 \rightarrow 0$. Hence, pick $\epsilon$ small enough that $\delta_2 < \delta/2$.

If $D(M||P) = \infty$, there are two cases to consider.

First, if $Q_M \ll Q_P$, then the same argument as the finite $D(M||P)$ case above shows that
\begin{align}
 -\inf_{Q_\nu \in \Gamma} D(Q_\nu || Q_P) \leq -2L + \delta_2,
\end{align}
where $\delta_2 < L$ can be ensured by choosing $\epsilon$ small enough. 

Second, if we do not have $Q_M \ll Q_P$, then there is a set $A \in \calQ$
such that $Q_M(A) > 0$ and $Q_P(A) = 0$. If $\epsilon$ is chosen sufficiently small that
$\delta_1 < M(A)/2$, then $Q_\nu(A) > 0$ for all $Q_\nu \in \Gamma$ and $D(Q_\nu || Q_P) = \infty$ 
for all $Q_\nu \in \Gamma$.

Either way, $-\inf_{Q_\nu \in \Gamma} D(Q_\nu || Q_P) \leq -L$, where
$L>0$ is arbitrary.
\end{proof}

\begin{theorem}
\label{thm:nr_ld_ub}
Let $P_{XY}$ be a joint distribution on $E_X \times E_Y$ and $P_X$ and
$P_Y$ denote its marginals.
Let $\bfy^{n}$ be a sequence 
and $\bfX^n$ a random sequence drawn \iid according to $\mu^k = \otimes_{i=1}^n P_X$. 
If $D(P_{XY} || P_X \times P_Y)$ is finite then for each $\delta > 0$, there
are $\epsilon(\delta)$ and $\bar{\epsilon}(\delta)$ such that if $\epsilon < \epsilon(\delta)$,
$\bar{\epsilon} < \bar{\epsilon}(\delta)$ and
$\bfy^n \in A_{\bar{\epsilon}}^n(P_Y)$ for all $n$ greater than some $N$, then
\begin{align}
 \limsup_n \frac{1}{n} \log \mu^{n}( (\bfX^{n}, \bfy^{n}) \in A^n_{\epsilon}(P_{XY}) ) 
 \leq -D(P_{XY} || P_X \times P_Y) + \delta.
 \label{eq:nr_ld_ub}
\end{align}
If $D(P_{XY} || P_X \times P_Y) = \infty$, then for every $L > 0$, there is
a sufficiently small $\epsilon, \bar{\epsilon} > 0$ such that \eqref{eq:nr_ld_ub} holds with the
right side replaced by $-L$.
\end{theorem}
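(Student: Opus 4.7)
The approach I would take mirrors the quantization argument in the proof of Theorem~\ref{thm:dev_rand}, adapted to the conditional setting in which $\bfy^n$ is fixed and weak* $P_Y$-typical rather than random. First I would apply Corollary~\ref{cor:field} to each of $P_X$ and $P_Y$ to obtain countable generating fields $\calF_X$ and $\calF_Y$ whose elements have null boundary under the respective marginals; then for any finite partitions $\calQ_X \subset \calF_X$ and $\calQ_Y \subset \calF_Y$, the rectangular product partition of $E_X \times E_Y$ with atoms $A_i \times B_j$ automatically satisfies $P_{XY}(\partial(A_i \times B_j)) \leq P_X(\partial A_i) + P_Y(\partial B_j) = 0$. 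When $D(P_{XY} || P_X \times P_Y)$ is finite, Lemma~\ref{lem:Ixy} lets me refine these partitions so that the induced discrete measures $Q_{XY}, Q_X, Q_Y$ on the atoms obey $D(Q_{XY} || Q_X \times Q_Y) \geq D(P_{XY} || P_X \times P_Y) - \delta/2$; in the infinite case I instead arrange $D(Q_{XY} || Q_X \times Q_Y) \geq 2L$.

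Next I would convert the two typicality hypotheses into finite-alphabet events via the Portemanteau theorem (part~\ref{thm:pmt:sets} of Theorem~\ref{thm:portemanteau}). Because $P_{XY}(\partial(A_i \times B_j)) = 0$, the event $(\bfX^n, \bfy^n) \in A^n_\epsilon(P_{XY})$ forces the joint empirical $\hat Q^n_{XY}$ on $\calQ_X \times \calQ_Y$ to satisfy $|\hat Q^n_{XY}(A_i \times B_j) - Q_{XY}(A_i \times B_j)| < \delta_1$ for every $i,j$, with $\delta_1 \rightarrow 0$ as $\epsilon \rightarrow 0$; likewise $\bfy^n \in A^n_{\bar\epsilon}(P_Y)$ forces $|\hat Q^n_Y(B_j) - Q_Y(B_j)| < \bar\delta_1$ with $\bar\delta_1 \rightarrow 0$ as $\bar\epsilon \rightarrow 0$. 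After projecting through the partition, $\hat\bfX^n$ is iid with law $Q_X$ and independent of the fixed $\hat\bfy^n$, so the remaining analysis is genuinely finite-alphabet.

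At this point I would invoke the finite-alphabet conditional method of types. Let $\Gamma$ be the collection of joint types $\nu$ on $\calQ_X \times \calQ_Y$ whose $Y$-marginal equals the (fixed) type of $\hat\bfy^n$ and which satisfy $|\nu(A_i \times B_j) - Q_{XY}(A_i \times B_j)| \leq \delta_1$; the cardinality of $\Gamma$ is polynomial in $n$. The probability in question is bounded above by $\mu^n(\hat Q^n_{XY} \in \Gamma)$, and by the standard type-counting bound \cite[Theorem~2.1.10]{Dembo:1998} one obtains
\begin{align}
 \limsup_n \frac{1}{n} \log \mu^n(\hat Q^n_{XY} \in \Gamma) \leq -\inf_{\nu \in \Gamma} D(\nu || Q_X \times \hat Q^n_Y).
\end{align}
On the set $\{\nu : \nu \ll Q_X \times Q_Y\}$, which contains $Q_{XY}$ in the finite-divergence case, $D(\nu || Q_X \times Q'_Y)$ is jointly continuous in $(\nu, Q'_Y)$ at $(Q_{XY}, Q_Y)$, so choosing $\epsilon, \bar\epsilon$ small enough that both $\delta_1$ and $\bar\delta_1$ are small yields the infimum above at least $D(Q_{XY} || Q_X \times Q_Y) - \delta/2$. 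Combined with the partition choice in step one this delivers \eqref{eq:nr_ld_ub}.

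The infinite-divergence case splits exactly as in Theorem~\ref{thm:dev_rand}: if $Q_{XY} \ll Q_X \times Q_Y$, the continuity argument above gives an upper bound of $-2L + \delta'$, which can be made at most $-L$; otherwise some atom $A_i \times B_j$ has $Q_{XY}(A_i \times B_j) > 0$ while $Q_X(A_i) Q_Y(B_j) = 0$, and whichever factor vanishes forces the probability to $0$ for small $\epsilon, \bar\epsilon$ (the hypothesis that $\bfy^n$ is $P_Y$-typical is precisely what rules out the $Q_Y(B_j)=0$ case via the bound on $|\hat Q^n_Y(B_j) - Q_Y(B_j)|$). The hard part is step three: the $Y$-marginal appearing in the conditional divergence is $\hat Q^n_Y$ rather than $Q_Y$, so one must verify joint continuity of $D(\cdot || Q_X \times \cdot)$ at the relevant point and calibrate $\delta_1, \bar\delta_1$ simultaneously to keep the continuity slack within the allotted $\delta/2$ uniformly over all compatible conditional types in $\Gamma$.
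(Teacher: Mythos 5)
Your overall route --- quantize via the $P_X$- and $P_Y$-adapted generating fields of Corollary~\ref{cor:field}, translate weak* typicality into constraints on the induced discrete joint type, and then bring in a finite-alphabet conditional large-deviations bound, tuning the partitions fine enough via Lemma~\ref{lem:Ixy} --- is the same as the paper's. The gap is in the step you label ``standard type-counting bound'': Theorem~2.1.10 of Dembo--Zeitouni is Sanov's upper bound for an \emph{iid} sequence, whereas after projecting, the pairs $(\hat X_\ell, \hat y_\ell)$ are independent but \emph{not} identically distributed (the $\hat y_\ell$ are a fixed non-constant sequence), so that theorem does not, by itself, yield
$\limsup_n \frac{1}{n}\log\mu^n(\hat Q^n_{XY}\in\Gamma) \leq -\inf_{\nu\in\Gamma} D(\nu\,\|\,Q_X\times\hat Q^n_Y)$.
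The paper fills exactly this hole: it partitions the $n$ positions according to which atom $B_j$ contains $y_\ell$, notes that the $X$-letters landing in a fixed bin $B_j$ form a genuinely iid $Q_X$ sub-sequence of length $N_{n,j}$, factors the probability over $j\in\calJ_{>0}$ by independence across bins, applies iid Sanov (2.1.10) within each bin, and scales each bin's exponent by $N_{n,j}/n$, which is pinned near $Q_Y(B_j)$ by the $P_Y$-typicality of $\bfy^n$ through \eqref{eq:close1}--\eqref{eq:close2}. Recombining gives $\sum_{j} Q_Y(B_j) D(Q_{X|Y}(\cdot|B_j)\,\|\,Q_X) = D(Q_{XY}\,\|\,Q_X\times Q_Y)$ up to the $(1-\epsilon_1)$ slack. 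This bin-wise decomposition also disposes of what you flag as the ``hard part'' --- the $\hat Q^n_Y$ versus $Q_Y$ calibration --- concretely via the ratio bounds, without any appeal to joint continuity of $D(\cdot\,\|\,Q_X\times\cdot)$ in its second argument. To make your argument rigorous you should either supply this bin decomposition explicitly or replace the 2.1.10 citation with a genuine conditional-types lemma of the Csisz\'ar--K\"orner kind; as written the key inequality is asserted rather than established.
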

\begin{proof}
 See Appendix.
\end{proof}

\begin{theorem}
\label{thm:nr_ld_lb}
Let $P_{XY}$ be a joint distribution on $E_X \times E_Y$ and $P_X$ and
$P_Y$ denote its marginals.
Let $\bfy^{n}$ be a sequence and $\bfX^{n}$ a random sequence drawn \iid 
according to $\mu^{n} = \otimes_{i=1}^{n} P_X$. 
Then, for each  $\delta > 0$ and $\epsilon>0$  there is an $\bar{\epsilon}(\epsilon, \delta)>0$
such that if $\bfy^n \in A^n_{\bar{\epsilon}(\epsilon, \delta)}(P_Y)$ for all $n$ greater than some $N$, then
\begin{align}
 \liminf_n \frac{1}{n} \log \mu^{n}( (\bfX^{n}, \bfy^{n}) \in A^n_{\epsilon}(P_{XY}) ) \geq - D(P_{XY} || P_X \times P_Y) - \delta.
\end{align}
\end{theorem}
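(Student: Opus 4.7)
My plan is to mirror the quantization strategy of the proof of Theorem~\ref{thm:nr_ld_ub} but invoke the \emph{lower}-bound side of the classical finite-alphabet method of types instead of the Sanov-type upper bound used there. Given $\epsilon, \delta > 0$, the goal is to reduce to a finite-alphabet problem on atoms of a well-chosen partition, apply the standard conditional-type lower bound there, and lift the estimate back to the original Polish alphabets.

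First, using Corollary~\ref{cor:field}, I would select countable generating fields $\calF_X, \calF_Y$ whose generating sets have null $P_X$- and $P_Y$-boundary, respectively. From these I would extract a finite rectangular partition $\calQ = \{A_i \times B_j\}_{i\leq I, j\leq J}$ of $E_X \times E_Y$, together with a threshold $\delta_0 > 0$, such that any probability measure $\nu$ on $E_X \times E_Y$ satisfying
\[
\max_{i,j}\,\bigl|\nu(A_i \times B_j) - P_{XY}(A_i \times B_j)\bigr| < \delta_0
\]
automatically lies in $B(P_{XY}, \epsilon)$. The existence of such $\calQ$ combines tightness of $P_{XY}$ on Polish $E_X \times E_Y$ (to isolate almost all mass on a compact set) with a uniform approximation of bounded continuous test functions by simple functions on a sufficiently fine partition; this is enough because the metric $d$ inducing weak* can be controlled by a countable family of such integrals.

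Writing $\hat P_X, \hat P_Y, \hat P_{XY}$ for the discrete laws induced on the atoms of $\calQ$, I would then use the null-boundary property to pick $\bar\epsilon(\epsilon, \delta) > 0$ small enough that $\bfy^n \in A^n_{\bar\epsilon}(P_Y)$ forces $\hat{\bfy}^n$ (the quantization of $\bfy^n$) to be strongly $\delta_0/2$-typical with respect to $\hat P_Y$ in the finite-alphabet letter sense. Conditional on such a $\hat{\bfy}^n$, the standard conditional-type lower bound on the finite alphabet $\calQ_X \times \calQ_Y$ gives
\[
\mu^n\bigl(\hat{\bfX}^n \text{ jointly } \delta_0\text{-strongly typical with } \hat{\bfy}^n\bigr) \geq 2^{-n(I(\hat X;\hat Y) + \delta_1)}
\]
for any chosen $\delta_1 > 0$ once $\delta_0$ is small and $n$ is large. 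By Lemma~\ref{lem:Ixy}, $I(\hat X;\hat Y) = D(\hat P_{XY}\|\hat P_X \times \hat P_Y) \leq D(P_{XY}\|P_X \times P_Y)$, so taking $\delta_1 < \delta$ yields the claimed $-D - \delta$ bound.

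The two halves link up because joint $\delta_0$-strong typicality of $(\hat{\bfX}^n, \hat{\bfy}^n)$ is exactly the statement that $|P^n_{XY}(A_i \times B_j) - P_{XY}(A_i \times B_j)| < \delta_0$ for every atom, which by the first step places the joint empirical $P^n_{XY}$ in $B(P_{XY}, \epsilon)$. I expect the main obstacle to be that first step: verifying that atom-wise closeness on a sufficiently fine partition forces weak* closeness is the only genuinely new measure-theoretic ingredient and requires a careful tightness-plus-equicontinuity argument. The remaining steps are either standard finite-alphabet method-of-types calculations or direct consequences of results already established earlier in the paper.
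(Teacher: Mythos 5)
Your overall strategy matches the paper's: quantize the Polish alphabets to a finite rectangular partition drawn from the fields of Corollary~\ref{cor:field}, invoke the finite-alphabet method-of-types lower bound on conditional types, and lift the estimate back via Lemma~\ref{lem:Ixy}. The second half of your plan is essentially a restatement of the paper's computation (the paper also reduces to $-H_{P_{XY}\|P_X\times P_Y}(\calQ_X\times\calQ_Y)$, which your $-I(\hat X;\hat Y)$ is, and picks the partition fine enough that this approximates $-D(P_{XY}\|P_X\times P_Y)$).

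The genuine gap is in the step you yourself flag as ``the only genuinely new measure-theoretic ingredient,'' and your proposed argument for it does not quite close. You assert that tightness plus uniform approximation of bounded continuous test functions by simple functions will show that atom-wise $\delta_0$-closeness on a single fine partition forces $\nu \in B(P_{XY},\epsilon)$. The difficulty is that the metric $d$ is left unspecified in the paper (any metric inducing weak* is allowed), so you cannot assume it is controlled by a \emph{finite} collection of test functions on a compact set; making that rigorous would require pinning down a specific metric and handling the tail mass contribution carefully, which you have not done. The paper avoids this entirely with a compactness-by-contradiction argument: it fixes a chain of successively finer partitions $\calQ_X^k\times\calQ_Y^k$ from $\calF_X,\calF_Y$ with thresholds $\lambda_k\to 0$, observes that any family of measures $M_{XY}^k$ satisfying the atom constraints at level $k$ must have $M_{XY}^k(A\times B)\to P_{XY}(A\times B)$ for every rectangle in $\calF_X\times\calF_Y$, then invokes the convergence-determining-class result \cite[Chapter 1, Theorem 2.2]{Billingsley:1999} to conclude $P_{XY}=\wlim_k M_{XY}^k$, and finally takes a supremum $\epsilon_k$ over the $k$-th level constraint set to extract a level $K$ with $\epsilon_K<\epsilon$. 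That construction is metric-agnostic and does not need tightness or test-function estimates, so it is both cleaner and more robust than what you sketched. If you want to salvage your version, the most direct repair is to replace your tightness/approximation step with the same exhaustion-by-refining-partitions argument.

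One smaller omission: you tacitly assume $D(P_{XY}\|P_X\times P_Y)<\infty$ throughout; the paper dispatches the $D=\infty$ case in one line at the start (the claimed lower bound is then $-\infty$, hence vacuous), and you should do the same.
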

\begin{proof}
See Appendix.
\end{proof}

\section{Examples}
\label{sec:example}

We now apply the notion of weak* typical sequences to prove achievability
results for two channel coding examples. The first is the traditional
point-to-point channel. While more general results can be obtained
using information spectrum methods, the example highlights the
application of weak* typical sequences.

In the second example, we apply weak* typical sequences to
Gel'fand-Pinsker channels. These results cannot be obtained 
for arbitrary Polish spaces using weak/strong typical sequences.

In this section, the cost constraint $g(x)$ is continuous and
bounded. In Section \ref{sec:Gaussian}, we will consider the
Gaussian case with power constraint.

\subsection{Point-to-Point Channel}

We consider communicating over a channel $W_{Y|X}$, where the alphabets
$E_X$ and $E_Y$ are Polish spaces. For completeness, we briefly state some definitions.

An $(n, M, P_e)$ code is a set of $M$ codewords $\bfx_1, \ldots, \bfx_M$
and a decoder $\phi: E_Y^n \rightarrow \{1, \ldots, M\}$ such that the 
average probability of error is  
\begin{align}
 P_e = \frac{1}{M} \sum_{v=1}^M P[\phi(\bfY) \neq v | \bfX = \bfx_v].
\end{align}

A rate $R$ is said to be achievable if there is a
sequence of codes $(n, M_n, P_e^n)$ with block lengths 
$n$, $R = \lim_n \frac{1}{n} \log M_n$, 
and probability of error $P_e^n \rightarrow 0$.

We will show the following well known result using weak* typical sequences.

\begin{theorem}
Let $W_{Y|X}$ be a communication channel with Polish input and output alphabets and 
input constraint $(g(x), \Gamma)$. Then any rate 
\begin{align}
 R < \sup_{P_X: E_{P_X} [g(X)] < \Gamma} I(X; Y)
\end{align}
is achievable.
\end{theorem}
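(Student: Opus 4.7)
The plan is a standard random-coding argument with weak* joint typicality decoding, leaning on the consistency and large-deviation machinery developed in Section~\ref{sec:typical}. For any $R$ strictly below the claimed supremum, fix $P_X$ with $E_{P_X}[g(X)] < \Gamma$ and $R < I(X;Y)$. Set the slacks $\gamma \eqdef \Gamma - E_{P_X}[g(X)] > 0$ and $\eta \eqdef I(X;Y) - R > 0$, and introduce small parameters $\epsilon, \epsilon' > 0$ whose final values will be constrained below. Draw $M_n = \lceil 2^{nR} \rceil$ codewords $\bfX_1, \ldots, \bfX_{M_n}$ \iid according to $\otimes_{i=1}^n P_X$. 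The encoder maps message $v$ to $\bfX_v$ if $\bfX_v \in A^n_\epsilon(P_X)$, and otherwise declares an encoding error; by Theorem~\ref{thm:w*b}(\ref{thm:w*b:cost}), shrinking $\epsilon$ if necessary makes every $\bfx \in A^n_\epsilon(P_X)$ satisfy $g(\bfx) < \Gamma$, so admissible codewords automatically obey the input constraint. The decoder, upon receiving $\bfY$, searches for the unique index $v$ with $(\bfX_v, \bfY) \in A^n_{\epsilon'}(P_X \otimes W_{Y|X})$ and declares an error if no such index, or more than one, exists.

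By the symmetry of random coding, assume message $v = 1$ was sent, and decompose the total error event as $\calA_1 \cup \calA_2 \cup \calA_3$, where $\calA_1 = \{\bfX_1 \notin A^n_\epsilon(P_X)\}$, $\calA_2 = \{\bfX_1 \in A^n_\epsilon(P_X)\ \text{but}\ (\bfX_1, \bfY) \notin A^n_{\epsilon'}(P_X \otimes W_{Y|X})\}$, and $\calA_3 = \{\exists\, v \neq 1 : (\bfX_v, \bfY) \in A^n_{\epsilon'}(P_X \otimes W_{Y|X})\}$. Theorem~\ref{thm:w*b}(\ref{thm:w*b:Varadarajan}) gives $P(\calA_1) \to 0$. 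For $\calA_2$, any realized $\bfX_1 \in A^n_\epsilon(P_X)$ is a non-random sequence input to the channel, and Theorem~\ref{thm:aml} makes $P(\calA_2)$ arbitrarily small provided $\epsilon$ is sufficiently small in terms of $\epsilon'$. For $\calA_3$, observe that outside $\calA_2$ the output $\bfY$ is weak* $(P_Y, \bar{\epsilon})$-typical for some $\bar{\epsilon}$ shrinking with $\epsilon'$, by Theorem~\ref{thm:w*b}(\ref{thm:w*b:marginal}). Since each $\bfX_v$ with $v \neq 1$ is independent of $\bfY$ and distributed as $\otimes_{i=1}^n P_X$, Theorem~\ref{thm:nr_ld_ub} together with Remark~\ref{rem:delta} yields, for any fixed $\delta > 0$, sufficiently small $\epsilon'$, and large $n$,
\begin{align}
 P\bigl( (\bfX_v, \bfY) \in A^n_{\epsilon'}(P_X \otimes W_{Y|X}) \,\big|\, \bfY \bigr) \leq 2^{-n(I(X;Y) - \delta)}.
\end{align}
A union bound over the $M_n - 1$ incorrect codewords contributes at most $2^{nR} \cdot 2^{-n(I(X;Y) - \delta)}$, which vanishes whenever $\delta < \eta$. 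Averaging over the random codebook and performing the usual expurgation selects a deterministic codebook of rate $R$ whose maximum error probability tends to zero, establishing achievability.

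The chief obstacle is the order in which the small parameters must be pinned down. One first chooses $\delta < \eta$; then $\epsilon'$ small enough that Theorem~\ref{thm:nr_ld_ub} delivers the exponent $I(X;Y) - \delta$ and so that the induced $\bar{\epsilon}$ from Theorem~\ref{thm:w*b}(\ref{thm:w*b:marginal}) is still within the region of validity of Theorem~\ref{thm:nr_ld_ub}; and finally $\epsilon$ small enough both to secure the cost slack $\gamma$ via Theorem~\ref{thm:w*b}(\ref{thm:w*b:cost}) and to drive $P(\calA_2)$ below any prescribed level through Theorem~\ref{thm:aml} at the fixed target radius $\epsilon'$. Because $g$ is continuous and bounded and because Theorem~\ref{thm:nr_ld_ub} has already absorbed the necessary alphabet quantization inside its proof, no further external quantization of $E_X$ or $E_Y$ is required — the weak* typicality framework handles the general Polish setting in a single pass.
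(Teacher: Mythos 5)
Your argument is correct, and it follows the same random-coding-with-weak*-typicality-decoding template as the paper, but at two steps you reach for heavier machinery than the paper needs. For the event $\calA_2$, the paper simply observes that $(\bfX_1,\bfY)$ is jointly i.i.d.\ according to $P_X \otimes W_{Y|X}$ and invokes Theorem~\ref{thm:w*b}(\ref{thm:w*b:Varadarajan}) directly; your route through Theorem~\ref{thm:aml} (conditioning on a realized typical $\bfx_1$) is valid but unnecessary in the point-to-point setting. More substantively, for $\calA_3$ the paper uses Theorem~\ref{thm:dev_rand}: since $\bfY$ is marginally an i.i.d.\ $P_Y$ sequence and each wrong codeword $\bfX_v$ is independent of $\bfY$ with law $\otimes P_X$, the pair $(\bfX_v, \bfY)$ is drawn from $\otimes(P_X \times P_Y)$ and the \emph{unconditional} large-deviation bound $\approx 2^{-nD(P_{XY}\|P_X \times P_Y)}$ applies without ever conditioning on $\bfY$ being typical. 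You instead condition on $\bfY \in A^n_{\bar\epsilon}(P_Y)$ and invoke Theorem~\ref{thm:nr_ld_ub}, the fixed-$\bfy$ bound, which is the tool the paper reserves for the Gel'fand--Pinsker proof where the input sequence really is selected by the encoder and hence not marginally i.i.d. Both routes are sound; the paper's avoids the extra conditioning and the subtle uniformity-over-typical-$\bfy$ issue that Theorem~\ref{thm:nr_ld_ub} carries, while yours has the virtue of previewing exactly the mechanics needed later for the side-information channel. Finally, your encoder rejects non-typical codewords and thereby automatically enforces the cost constraint, whereas the paper's encoder always transmits and bounds $P[g(\bfX_1) \geq \Gamma]$ as a separate error term via Theorem~\ref{thm:w*b}(\ref{thm:w*b:cost}); this is a cosmetic difference.
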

\begin{remark}
 The converse can be obtained with the usual Fano inequality. 
\end{remark}

\begin{proof}
 The proof follows the usual random coding argument with the exception that we now use
the results derived for weak* typical sequences. Pick any $\gamma > 0$. We will bound the 
probability of error for a random code by $3\gamma$ for large enough $n$.

In particular, as usual, pick a $P_X$ which satisfies the constraint $E_{P_X}[g(X)] < \Gamma$.
We generate $M_n = \lfloor 2^{nR} \rfloor$ codewords of length $n$ with each entry $\iid$ according
to $P_X$, and denote these as $\bfX_1, \ldots, \bfX_{M_n}$.

The encoder transmits $\bfX_V$ where $V$ is uniform among the indices $\{1, \ldots, M_n\}$.
The decoder employs weak* typical decoding. Specifically, it looks for an index
$v$ such that $(\bfX_v, \bfY)$ are weak* $(P_X \otimes W_{Y|X}, \epsilon)$ typical
for some $\epsilon> 0$ and declares $v$ as the transmitted index if such a $v$
exists and is unique. Otherwise, an error is declared.

By the usual symmetry, without loss of generality we assume that the index $v=1$
is selected at the transmitter. The probability of error for a $(P_{X} \otimes W_{Y|X}, \epsilon)$-typical
decoder is then bounded as
\begin{align}
 P_e^n \leq P[g(\bfX_1) \geq \Gamma] + P[(\bfX_1, \bfY) \notin A^{n}_\epsilon (P_X \otimes W_{Y|X})]
+ P[\cup_{v \neq 1} (\bfX_v, \bfY) \in A^{n}_\epsilon (P_X \otimes W_{Y|X})].
\label{eq:cc:bound}
\end{align}

By part \ref{thm:w*b:Varadarajan} of Theorem \ref{thm:w*b},
$P_{e,n}^2 \eqdef P[(\bfX_1, \bfY) \notin A^{n}_\epsilon (P_X \otimes W_{Y|X})] \rightarrow 0$ as $n \rightarrow 0$.
Thus $P_{e,n}^2 < \gamma$ for all $n$ larger than some $N_2$.

Second, we note that by Theorem \ref{thm:dev_rand}\footnotemark and Remark \ref{rem:delta}, 
for any index $v \neq 1$ and any $n$ greater than some sufficiently large $\bar{N}_3$,
\begin{align}
P[(\bfX_v, \bfY) \in A^{n}_\epsilon (P_X \otimes W_{Y|X})] \leq 2^{-n(I(X;Y) - \delta)}
\end{align}
and $\delta \rightarrow 0$ as $\epsilon \rightarrow 0$. Thus, 
for large enough $n$, the union bound implies
\begin{align}
 P_{e,n}^3 \eqdef P[\cup_{v \neq 1} (\bfX_v, \bfY) \in A^{n}_\epsilon (P_X \otimes W_{Y|X})]
 \leq 2^{nR} 2^{-n(I(X;Y)- \delta)},
\end{align}
and $P_{e,n}^3 < \gamma$ for all $n$ larger than some $N_3$ provided $R < I(X;Y) - \delta$.

\footnotetext{Here we assume that $I(X;Y)$ is finite. The case that
$I(X;Y) = \infty$ can be considered separately.}

Finally, one can upper bound $P_{e,n}^1 \eqdef P[g(\bfX_1) \geq \Gamma]$ by
$P[\bfX_1 \notin A^{n}_{\alpha}(P_X)] + P[g(\bfX_1) \geq \Gamma | \bfX_1 \in A^{n}_{\alpha}(P_X)]$
for any arbitrary $\alpha>0$. 

By part \ref{thm:w*b:cost} of Theorem \ref{thm:w*b} and since $E_{P_X} [g(X)] < \Gamma$, there is
a sufficiently small $\alpha>0$ such that 
$P_{e,n}^5 \eqdef P[g(\bfX_1) \geq \Gamma | \bfX_1 \in A^{n}_{\alpha}(P_X)] = 0$.

By part \ref{thm:w*b:Varadarajan} of Theorem \ref{thm:w*b}, for any $\alpha > 0$,
$P_{e,n}^4 \eqdef P[\bfX_1 \notin A^{n}_{\alpha}(P_X)]$ vanishes as $n \rightarrow \infty$.
Thus $P_{e,n}^4 < \gamma$ for all $n$ larger than some $N_4$.

Thus for any rate $R < I(X;Y) - \delta$, the bound in \eqref{eq:cc:bound}
is at most $3\gamma$ for all $n$ sufficiently large.
Finally, $\delta > 0$ can be made arbitrarily small by choosing $\epsilon$ small enough.
\end{proof}

\begin{remark}
 Since $E_{P_X} [g(X)] < \Gamma$ and each codeletter of each codeword is \iid, 
one could have bounded $P[g(\bfX_1) \geq \Gamma]$ by the strong
law of large numbers. However, this approach will not be possible
for Gel'fand-Pinsker channels as the channel input is not generated
by independent and randomly chosen codeletters.
\end{remark}

\subsection{Gel'fand-Pinsker Channels}

We now consider proving an achievability result for Gel'fand-Pinsker channels
assuming Polish alphabets. The achievability result for $R < I(U;Y) - I(U;S)$
was proved in the discrete case in \cite{Gelfand:1980}.  The Gaussian case 
with additive interference and noise was considered in \cite{Costa:1983}
and further results on additive interference and noise can be found in
\cite{Cohen:2002,Yu:2001,Zamir:2002,Kim:2004}. Here, we consider
achievability for a general channel $W_{Y|SX}$ with Polish alphabets directly using weak* typical sequences.

We start with a brief set of definitions. A source sends a message
$V \in \{1, \ldots, M\}$ selected uniformly at random to a receiver by transmitting a sequence
$\bfx$. The channel $W_{Y|XS}$ results in an output $\bfY$ that
depends stochastically on the input $\bfx$ as well as an interference
sequence $\bfS$, where $\bfS$ is an \iid random sequence drawn according
to $P_S$. Furthermore, the encoder is aware of the interference
sequence $\bfS$ apriori and the decoder is unaware of the interference. 
Thus, the encoder is described by the mapping
$\phi^n_{\sf tx}: \{1, \ldots, M\} \times E_S^n \rightarrow E_X^n$
while the decoder is the mapping 
$\phi^n_{\sf rx}: E_Y^n \rightarrow \{1, \ldots, M\}$.

A code is a tuple $(\phi^n_{\sf tx}, M, \phi^n_{\sf rx}, P_e)$
where $P_e = P[\phi^n_{\sf rx}(\bfY) \neq V | \bfX = \phi^n_{\sf tx}(V, \bfY)]$.
A rate $R$ is said to be achievable if there exists
a sequence of codes $(\phi^{n}_{\sf tx}, M_n, \phi^{n}_{\sf rx}, P_e^n)$
with $\lim_n \frac{1}{n} \log M_n = R$ and $\lim_n P_e^n = 0$.

We have the following achievability result for Polish alphabets.
\begin{theorem}
For Gel'fand-Pinsker channels with cost constraint $(g(x), \Gamma)$
at the transmitter, any rate
\begin{align}
 R < \sup_{P_{U|S}, W_{X|US} : E[g(X)] < \Gamma} I(U;Y) - I(U;S),
\end{align}
is achievable where the supremum is over all {\em transition kernels} $P_{U|S}$ and all
{\em channels} $W_{X|US}$.
\end{theorem}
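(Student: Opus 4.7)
The plan is a standard random-binning achievability argument for Gel'fand-Pinsker, with every classical strong-typicality step replaced by the weak*-typical toolkit from Section~\ref{sec:typical}. Fix $\eta>0$ and pick $P_{U|S}$ and $W_{X|US}$ achieving the supremum to within $\eta$ while also satisfying $E[g(X)] < \Gamma - \eta$, with joint law $P_{USXY} \eqdef P_S \otimes P_{U|S} \otimes W_{X|US} \otimes W_{Y|XS}$. Choose small $\delta_1,\delta_2,\epsilon>0$ and generate a codebook of auxiliary sequences $\bfU(v,j)$ for $v\in\{1,\ldots,M_n\}$, $j\in\{1,\ldots,2^{n(I(U;S)+\delta_2)}\}$, each drawn iid from the marginal $P_U$, where $M_n = \lfloor 2^{nR}\rfloor$ and $R = I(U;Y) - I(U;S) - \delta_1 - \delta_2$, so the total codebook has at most $2^{n(I(U;Y)-\delta_1)}$ entries. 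Given message $V$ and state $\bfS$, the encoder picks the smallest $j$ (if any) with $(\bfU(V,j),\bfS) \in A^n_\epsilon(P_{US})$ and passes the pair through the memoryless channel $W_{X|US}$ to obtain $\bfX$; on failure an arbitrary $j$ is used. The decoder outputs the unique bin index $\hat v$ for which some $(\bfU(\hat v,j),\bfY)\in A^n_{\epsilon'}(P_{UY})$.

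I would decompose the error into five events. (i) $E_1 = \{\bfS \notin A^n_\alpha(P_S)\}$ has $P(E_1)\to 0$ by part~\ref{thm:w*b:Varadarajan} of Theorem~\ref{thm:w*b}. (ii) On $E_1^c$, the encoding-failure event $E_2$ has probability at most $(1-p_n)^{2^{n(I(U;S)+\delta_2)}}$, where by Theorem~\ref{thm:nr_ld_lb} (applied with $\bfy = \bfS$) one has $p_n \geq 2^{-n(I(U;S)+\delta_2/2)}$ for small enough $\alpha,\epsilon$ and large $n$, so $P(E_2 | E_1^c)\to 0$ since $\exp(-2^{n\delta_2/2})\to 0$. (iii) Once $(\bfU(V,j),\bfS)$ is jointly typical, two applications of Theorem~\ref{thm:aml} (first through $W_{X|US}$, then through $W_{Y|XS}$) make $(\bfU,\bfS,\bfX,\bfY)$ jointly $(P_{USXY},\epsilon'')$-typical with probability $\to 1$; call the complement $E_3$. (iv) By parts~\ref{thm:w*b:marginal} and~\ref{thm:w*b:cost} of Theorem~\ref{thm:w*b}, on $E_3^c$ the empirical cost $g(\bfX)$ lies within $\eta/2$ of $E_{P_X}[g(X)]<\Gamma-\eta$ for $\epsilon''$ small enough, so the cost-violation event $E_4 = \{g(\bfX)\geq\Gamma\}$ has probability $0$ conditionally on $E_3^c$. (v) For the erroneous-bin event $E_5$, each $\bfU(v,j)$ with $v\neq V$ is iid $P_U$ and independent of $\bfY$; on $E_3^c$ the output $\bfY$ is weak*-typical for $P_Y$, so Theorem~\ref{thm:nr_ld_ub} together with Remark~\ref{rem:delta} gives $P((\bfU(v,j),\bfY)\in A^n_{\epsilon'}(P_{UY}))\leq 2^{-n(I(U;Y)-\delta_3)}$, and the union bound over at most $2^{n(I(U;Y)-\delta_1)}$ competitors yields $P(E_5 | E_3^c)\leq 2^{-n(\delta_1-\delta_3)}\to 0$ whenever $\epsilon'$ is small enough that $\delta_3<\delta_1$. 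Sending $\eta,\delta_1,\delta_2\downarrow 0$ exhausts the claimed rate region.

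The main obstacle I anticipate is step~(ii). Theorem~\ref{thm:nr_ld_lb} is phrased as a $\liminf$ bound along a \emph{fixed} weak*-typical $\bfy$-sequence, whereas here the conditioning sequence $\bfS$ is random, so I need a lower bound on $p_n$ that is uniform over all realizations $\bfs\in A^n_\alpha(P_S)$ for some fixed small $\alpha$. This is handled exactly by the subsequence/contradiction device used in the proof of Theorem~\ref{thm:aml}: if no uniform $\alpha$ worked, one could extract typical $\bfs^{n_k}$ along which the bound fails, contradicting Theorem~\ref{thm:nr_ld_lb} on that subsequence. A secondary subtlety is step~(iv): since the encoder adapts to $\bfS$, the input $\bfX$ is \emph{not} iid and the classical strong law is unavailable; it is precisely the cost-consistency property of weak* typicality, together with marginal inheritance, that delivers the deterministic cost bound where entropy typicality would not. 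All remaining steps follow the familiar Gel'fand-Pinsker template.
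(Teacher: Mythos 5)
Your proposal is correct and follows essentially the same random-binning template as the paper: the same error decomposition (state-atypicality, encoding failure via Theorem~\ref{thm:nr_ld_lb}, joint typicality of $(\bfU,\bfS,\bfX,\bfY)$ via two applications of Theorem~\ref{thm:aml}, cost violation handled by the cost-consistency part of Theorem~\ref{thm:w*b}, and erroneous-bin event controlled by Theorem~\ref{thm:nr_ld_ub} plus a union bound). The one place where you go beyond the paper is in explicitly flagging that Theorems~\ref{thm:nr_ld_lb} and~\ref{thm:nr_ld_ub} are $\liminf/\limsup$ statements along a fixed sequence of $\bfy^n$'s and therefore require a uniformization (over all weak*-typical $\bfs$) before they can be applied inside the encoder/decoder error analysis; the paper simply asserts a single threshold $\bar{N}_1$ (resp.\ $\bar{N}_5$) without comment, and your proposed subsequence-and-contradiction fix, mirroring the proof of Theorem~\ref{thm:aml}, is exactly the right way to close that gap.
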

\begin{remark}
 Recall that a channel is a transition kernel that satisfies a weak* continuity condition.
\end{remark}

\begin{proof}
Again, the random coding argument is followed, however we now invoke weak* typical
sequences. Pick any $\gamma > 0$. We will show that for any $n$ larger than some
$N$, the probability of error with a random codebook is at most $4\gamma$.

Specifically, pick a transition kernel $P_{U|S}$ and channel $W_{X|US}$
which satisfy the constraint $E[g(X)] < \Gamma$
and let $\delta > 0$.
First, construct $M_n = \lfloor 2^{nR} \rfloor$ bins, with each bin containing
$\lceil 2^{n(I(U;S)+\delta)} \rceil$ sequences of length $n$ with each 
codeletter generated \iid according to the marginal $P_U$.  We denote
these sequences as $\bfU_1, \bfU_2, \ldots$, $\bfU_K$ where $K = \lfloor 2^{nR} \rfloor \times \lceil 2^{n(I(U;S)+\delta)} \rceil$.

Following the usual argument,
to encode message $v \in \{1, \ldots, M_n\}$, the encoder
looks in bin $v$ for a sequence $\bfU_i$ such that $(\bfU_i, \bfS) \in A_{\epsilon_1}^{n}(P_{US})$,
i.e. $(\bfU_i, \bfS)$ are weakly* $(P_{US},\epsilon_1)$ typical for some appropriate $\epsilon_1 > 0$.

If there is no such $\bfU_i$ sequence, an error is declared, which is denoted by
the event $\EE_1$. Otherwise,
the encoder constructs a sequence $\bfX$ generated by the memoryless channel
$W_{X|US}$. If 
\begin{align}
 \frac{1}{n} \sum_{\ell=1}^n g(X_\ell) \geq \Gamma,
\end{align}
then the transmission of $\bfX$ would violate the channel input constraint
and an error is declared, denoted by the event $\EE_2$. Otherwise,
$\bfX$ is transmitted over the channel.

The receiver obtains $\bfY$ and looks in all bins for a pair of sequences
$(\bfU_j, \bfY)$ that are jointly $(P_{UY}, \epsilon_3)$-typical for some 
appropriate $\epsilon_3$. If there is
a unique such pair, then the bin index in which $\bfU_j$ is present is declared
as the estimate $\hat{v}$ of $v$. If the index is not unique, or the bin
incorrect, we denote this error event by $\EE_3$.

The probability of error is bounded by
\begin{align}
 P_e \leq P[\EE_0] + P[\EE_1 \cap \bar{\EE}_0] + P[\EE_2|\bar{\EE}_1] + P[\EE_3|\bar{\EE}_1],
\label{eq:gp_bound}
\end{align}
where $E_0$ is the event that $\bfS$ is not $(\epsilon_0, P_S)$-typical
for some suitable $\epsilon_0 > 0$.

{\em Error analysis:} 

We start by analyzing $P[\EE_3|\bar{\EE}_1]$ and note that
\begin{align}
P[\EE_3|\bar{\EE_1}] \leq P[\EE_4|\bar{\EE}_1] + P[\EE_5|\bar{\EE}_1, \bar{\EE}_4],
\end{align}
where $\EE_4$ is the event that $(\bfU_i, \bfY)$ are not jointly $(P_{UY}, \epsilon_3)$-typical
and $\EE_5$ is the event that there is an index $j \neq i$ such that $(\bfU_j, \bfY)$  are jointly
$(P_{UY}, \epsilon_3)$-typical.

By applying Theorem \ref{thm:aml} twice, first to the channel $W_{X|US}$ and then to the channel $W_{Y|XS}$
and using Part \ref{thm:w*b:marginal} of Theorem \ref{thm:w*b},
there is an $\bar{\epsilon}_{us}(\epsilon_3, \gamma/2)$ such that for any 
$\bar{\epsilon}_{us} \leq \bar{\epsilon}_{us}(\epsilon_3, \gamma/2)$,
if $(\bfU_i,\bfS)$ is $(P_{US}, \bar{\epsilon}_{us})$-typical, 
then $P_{e,n}^4 \eqdef P[\EE_4|\bar{\EE}_1] < \gamma/2$ for all  $n$ greater than some $N_4$.

Now, conditioned on $\bar{\EE}_4$, by Part \ref{thm:w*b:marginal} of Theorem \ref{thm:w*b},
$\bfY = \bfy$ is $(P_Y, \bar{\epsilon}_Y)$-typical,
where $\bar{\epsilon}_Y \rightarrow 0$ as $\epsilon_3 \rightarrow 0$.
By Theorem \ref{thm:nr_ld_ub}, for any $\delta_3 > 0$, there is
sufficiently small $\epsilon_3(\delta_3)$, $\bar{\epsilon}_3(\delta_3)$ and large $\bar{N}_5$ such that
provided $\epsilon_3 < \epsilon_3(\delta_3)$ and $\bfy \in A^n_{\bar{\epsilon}_3(\delta_3)}$ for $n > \bar{N}_5$,
by the usual union bound
\begin{align}
 P[\EE_5|\bar{\EE}_1, \bar{\EE}_4] \leq 2^{nR} \lceil 2^{n(I(U;S)+\delta)} \rceil 2^{-n(I(U;Y)-\delta_3)},
\end{align}
Thus, selecting $\epsilon_3 < \epsilon_3(\delta_3)$ and $\epsilon_3$ small enough
that $\bar{\epsilon}_Y < \bar{\epsilon}_3(\delta_3)$, 
$P_{e,n}^5 \eqdef P[\EE_5|\bar{\EE}_1, \bar{\EE}_4] < \gamma/2$ for all $n$
greater than some $N_5$ provided $R < I(U;Y) - I(U;S) - \delta_3 - \delta$.
Note that $\delta_3 > 0$ and $\delta > 0$ are arbitrary.

We now analyze $P[\EE_2|\bar{\EE}_1]$.
For any $\epsilon_2 > 0$, there is an $\epsilon_1(\epsilon_2, \gamma) > 0$
such that for $\epsilon_1 < \epsilon_1(\epsilon_2, \gamma)$ by Theorem \ref{thm:aml}, 
\begin{align}
P_{e,n}^2 \eqdef P[(\bfU, \bfS, \bfX) \notin A^n_{\epsilon_2}(P_{USX})] < \gamma
\end{align}
for all $n$ larger than some $N_2$.
By  Part \ref{thm:w*b:cost} of Theorem \ref{thm:w*b},
there is small enough $\epsilon_2$
such that $P[g(\bfX) \geq \Gamma \; | \; (\bfU, \bfS, \bfX) \in A_{\epsilon_2}^n(P_{USX})] = 0$.
Select $\epsilon_1 < \min \{ \bar{\epsilon}_{us}(\epsilon_3, \gamma/2), \epsilon_1(\epsilon_2, \gamma) \}$.

We now analyze $P[\EE_1 \cap \bar{\EE}_0]$. Let $\delta_1 > 0$ be arbitrary. By Theorem \ref{thm:nr_ld_lb},
there is an $\bar{N}_1$, $\epsilon_0 = \epsilon_0(\epsilon_1, \delta_1)$
such that conditioned on the fact that if
$\bfS = \bfs$ is $(P_S, \epsilon_0)$-typical, 
\begin{align}
 P[(\bfU, \bfs) \in A_{\epsilon_1}^n(P_{US})] \geq  2^{-n(I(U;S)+\delta_1)},
\end{align}
for $n > \bar{N}_1$.
Thus,
\begin{align}
 P[\EE_1 \cap \bar{\EE}_0] &= E_{P_\bfS} \left[ P[\EE_1|\bfS = \bfs] \1{\bfS \in A_{\epsilon_0}^n(P_S)} \right]\\
 &= E_{P_\bfS} \left[ \left[ 1 - P[(\bfU, \bfs) \in A_{\epsilon_1}^{n}(P_{US})] \right]^{2^{n(I(U;S) + \delta)}} 
\1{\bfS \in A_{\epsilon_0}^n(P_S)} \right] \\
&\leq E_{P_\bfS} \left[ \left[ 1 - 2^{-n(I(U;S)+\delta_1)} \right]^{2^{n(I(U;S) + \delta)}} \1{\bfS \in A_{\epsilon_0}^n(P_S)} \right] \\
&\leq \left[ 1 - 2^{-n(I(U;S)+\delta_1)} \right]^{2^{n(I(U;S) + \delta)}} \\
&\leq \exp \left( - 2^{-n (\delta_1 - \delta)} \right),
\end{align}
where $P_{\bfS} = \otimes_{i=1}^n P_S$.
Thus, selecting $\delta_1 < \delta$ yields $P_{e,n}^1 \eqdef P[\EE_1|\bar{\EE}_0] < \gamma$ 
for all $n$ greater than some $N_1$.

Finally, we analyze $P[\EE_0]$.
However, by Part \ref{thm:w*b:Varadarajan} of Theorem \ref{thm:w*b},
for any $\epsilon_0 > 0$, $P_{e,0}^n \eqdef P[\bfS \notin A^n_{\epsilon_0}(P_S)]$ vanishes as $n \rightarrow 0$
and thus $P_{e,0}^n < \gamma$ for all $n$ greater than some $N_0$.
\end{proof}

\begin{remark} 
We note the following remarks.
First, we could not rely on the law of large numbers to argue that
$\bfX$ would satisfy the constraint pair $(g(x), \Gamma)$. This is because
while $\bfX$ was generated stochastically, it was done so based on the
pair $(\bfU_i, \bfS)$, where $\bfU_i$ was specifically chosen to satisfy a given property.
Thus instead we argued via Theorem \ref{thm:aml} that the triple 
$(\bfU_i, \bfS, \bfX)$ is $\epsilon_2$-typical (by the channel consistency property or Markov Lemma)
and then that $\bfX$ must satisfy the power constraint for sufficiently small $\epsilon_2$.

Second, we again employed the channel consistency property (or Markov Lemma)
to prove that the pair $(\bfU_i, \bfY)$ are jointly typical with high probability.
\end{remark}

\section{The Gaussian Case}
\label{sec:Gaussian}

In Section \ref{sec:example}, achievability results were proved for
a point-to-point channel as well as the Gel'fand-Pinsker channel with
input constraints. It was noted that due to the input constraint $(g(x), \Gamma)$,
either the (continuous) cost function $g(x)$ should be bounded, or the input alphabet
is compact (which trivially implies $g(x)$ is bounded).

This rules out the consideration of an input constraint $(g(x) = x^2, \sigma^2_X)$ with
a Gaussian input distribution as neither the cost function nor the input
alphabet is then bounded.

In this section, we show how one can recover the traditional achievability
results in both cases for Gaussian distributions. Specifically, we
will consider an input alphabet over the interval $E_{X_L} = [-L, L]$
and show that as $L \rightarrow \infty$, one can arbitrarily approach
the well-known results in the Gaussian case. It should be noted that
we consider all alphabets as subsets of $\reals$ for simplicity of
exposition only and the arguments apply equally well to alphabets
over $\reals^n$.

\subsection{Point-to-Point Channel}
\label{sec:P2P}

Here the capacity of the channel $Y = X + Z$ with $Z \sim {\cal N}(0,\sigma^2_Z)$
is well known to be $C = I(Y; X)$ evaluated for $X \sim {\cal N}(0,\sigma^2_X)$\footnotemark.

\footnotetext{While $X \sim {\cal N}(0,\sigma^2_X)$ does not satisfy the input constraint,
$X \sim {\cal N}(0,\sigma^2_X-\epsilon); 0 < \epsilon < \sigma^2_X$ does and the capacity follows by a simple
limit argument.}

Now, 
consider the family of random variables $X_L$ (indexed by $L > 0$)
with densities
\begin{align}
 f_{X_L}(x) = \left \{
\begin{array}{cc}
 0 & x < -L \\
 f_X(x) / K(L)  & -L \leq x \leq L \\ 
 0 & x > L
\end{array}
\right.,
\label{eq:truncated}
\end{align}
where $f_X(x)$ is the PDF of an ${\cal N}(0,\sigma^2_X)$ distribution and
\begin{align}
 K(L) = \int_{-L}^{L} f_X(x) \; dx
\end{align}
is a normalization constant with the property that $\lim_{L\rightarrow\infty} K(L) = 1$. 

For notational convenience, let the output random variable be denoted by $Y_L$ when the input
is $X_L$, i.e., $Y_L = X_L + Z$, and let the output random
variable be $Y$ when the input is $X \sim {\cal N}(0,\sigma^2_X)$, i.e.,
$Y = X + Z$. It
is straightforward to verify that $E[X_L^2] < P$ for all $L$,
and thus this input distribution satisfies the input constraint,
and the rate
\begin{align}
 R_L = I(Y_L; X_L)
\end{align}
is achievable for each $L$. 
We will show that as $L \rightarrow \infty$,
that $R_L = I(Y_L; X_L) \rightarrow I(X;Y)$.

First, note that 
\begin{align}
I(Y_L ; X_L) &= h(Y_L) - h(Y_L|X_L)  \\
&= h(Y_L) - h(Z),
\end{align}
and thus it suffices to show that $\lim_{L\rightarrow\infty} h(Y_L) = h(Y)$.

Second,
\begin{align}
 f_{Y_L}(y) &= \int_{-\infty}^{\infty} f_{Y|X}(y|x) f_{X_L}(x) \; dx \\
&= \frac{1}{K(L)} \int_{-L}^{L} f_{Y|X}(y|x) f_{X}(x) \; dx
\end{align}
Define 
\begin{align}
 g_{Y_L}(y) &\eqdef \int_{-L}^{L} f_{Y|X}(y|x) f_{X}(x) \; dx,
\label{eq:gYL}
\end{align}
then $f_{Y_L}(y) = g_{Y_L}(y)/K(L)$, and
\begin{align}
 h(Y_L) &= \int_{-\infty}^{\infty} f_{Y_L}(y) \log 1 / f_{Y_L}(y) \; dy \\
 &= \frac{1}{K(L)} \left[ \int_{-\infty}^{\infty} g_{Y_L}(y) \log 1/g_{Y_L}(y) \; dy \right]
 - \int_{-\infty}^{\infty} f_{Y_L}(y) \log K(L) \; dy \\
 &= -\frac{1}{K(L)} \left[ \int_{-\infty}^{\infty} g_{Y_L}(y) \log g_{Y_L}(y) \right] \; dy - \log K(L).
\end{align}
Since $\lim_{L\rightarrow\infty} K(L) = 1$, it remains only to show that the
term in the square brackets converges to $-h(Y)$ as $L \rightarrow \infty$.
However, because of \eqref{eq:gYL}, $g_{Y_L}(y)$
is continuous, strictly positive, strictly increasing in $L$ and converges pointwise to $f_{Y}(y)$.
Thus, the integrand $g_{Y_L}(y) \log g_{Y_L}(y)$ is continuous and 
converges pointwise to  $f_{Y}(y) \log f_{Y}(y)$ as $L \rightarrow \infty$.

Let $A = \{ y \in \reals | f_Y(y) < e^{-1}\}$. Since $x \log x$ is decreasing in $x$
for $0 \leq x < e^{-1}$, then for $y \in A$, $f_{Y_L}(y) \log f_{Y_L}(y)$ is decreasing
in $L$ and 
\begin{align}
 \lim_{L \rightarrow \infty} \int_A g_{Y_L}(y) \log g_{Y_L}(y) \; dy = \int_A f_{Y}(y) \log f_{Y}(y) \; dy
\end{align}
by the (Lebesgue) monotone convergence theorem.

Now consider the set $B = A^c = \{ y \in \reals | f_Y(y) \geq e^{-1}\}$ and note
that $B$ is closed (since $f_Y(y)$ is continuous) and bounded (since $f_Y(y)$
is a Gaussian pdf) and thus $B$ is compact. Thus, on the set $B$, $f_{Y_L}(y)$
converges uniformly to $f_Y(y)$ by Dini's theorem. Hence, there is a large enough
$\bar{L}$ such that for all $L > \bar{L}$, $f_{Y_L}(y) > e^{-2}$ for all $y \in B$. 
Let $K = \sup_{y \in \reals} f_Y(y)$. Then for $y \in B$ and $L > \bar{L}$, 
$|g_{Y_L}(y) \log g_{Y_L}(y)| \leq f_Y(y) \max\{2, |\log K|\}$ which is integrable.
Thus, by the dominated convergence theorem,
\begin{align}
 \lim_{L \rightarrow \infty} \int_B g_{Y_L}(y) \log g_{Y_L}(y) \; dy = \int_B f_{Y}(y) \log f_{Y}(y) \; dy,
\end{align}
and therefore
\begin{align}
 \lim_{L\rightarrow\infty} \int_{-\infty}^{\infty} g_{Y_L}(y) \log g_{Y_L}(y) \; dy
 &= \lim_{L \rightarrow \infty} \int_A g_{Y_L}(y) \log g_{Y_L}(y) \; dy 
+ \lim_{L \rightarrow \infty} \int_B g_{Y_L}(y) \log g_{Y_L}(y) \; dy \\ 
 &= \int_{-\infty}^{\infty} f_{Y}(y) \log f_{Y}(y) \; dy \\
 &= -h(Y),
\end{align}
as desired.

\subsection{Gel'fand-Pinsker Channels}
\label{sec:GP}

In the Gaussian case, it is well-known that the capacity is obtained
with the choice
\begin{align}
 U &= X + \alpha S \label{eq:gp1} \\
 Y &= X + Z + S, \label{eq:gp2}
\end{align}
where $X \sim {\cal N}(0, \sigma^2_X)$ and independent of $S$, and $\alpha$ is an appropriately chosen constant.

We follow the same strategy as in Section \ref{sec:P2P}. Namely, we consider
the family of truncated Gaussians $X_L$ given in \eqref{eq:truncated}, and obtain
\begin{align}
 U_L &= X_L + \alpha S \label{eq:gp_tr1} \\
 Y_L &= X_L + Z + S.\label{eq:gp_tr2}
\end{align}
As previously, we will show that
\begin{align}
 \lim_{L\rightarrow\infty} I(U_L;S) &= I(U;S) \\
 \lim_{L\rightarrow\infty} I(U_L;Y_L) &= I(U;Y). 
\end{align}

First, note that $I(U_L;S) = h(U_L) - h(X_L)$. Second,
$\lim_{L\rightarrow\infty} h(U_L) = h(U)$ follows by the same argument as
in Section \ref{sec:P2P} with $U_L$ replaced by $Y_L$ and
$\alpha S$ replaced by $Z$. Third, $\lim_{L\rightarrow\infty} h(X_L) = h(X)$
since
\begin{align}
\lim_{L\rightarrow\infty} \int_{-L}^L f_X(x) \log f_X(x) \; dx = \int_{-\infty}^{\infty} f_X(x) \log f_X(x) \; dx.
\end{align}

Next, we note that $I(U_L;Y_L) = h(U_L) + h(Y_L) - h(U_L, Y_L)$ and
$\lim_{L\rightarrow\infty} h(U_L) = h(U)$ was already argued, and $\lim_{L\rightarrow\infty} h(Y_L) = h(Y)$
follows similarly. We provide an outline of $\lim_{L\rightarrow\infty} h(U_L, Y_L) = h(U,Y)$.
Because of the additive nature \eqref{eq:gp1} -- \eqref{eq:gp_tr2}, if
$f_{UY|X}(u,y|x)$ denotes the conditional PDF of $U$ and $Y$ given $X$, then
\begin{align}
 f_{U_L,Y_L}(u,y) &= \frac{1}{K(L)} \int_{-L}^{L} f_{UY|X}(u,y|x) f_X(x) \; dx.
\end{align}
Thus, following the same argument as in Section \ref{sec:P2P}, one can
apply the monotone and dominated convergence theorems and 
obtain $\lim_{L\rightarrow\infty} h(U_L, Y_L) = h(U,Y)$.

\section{Conclusion}
\label{sec:conclusion}

In this paper, a notion of typical sequences based on the weak* topology
was defined. This notion of typical sequence applies to discrete,
continuous and mixed distribution and was shown to satisfy
consistency properties normally associated with strongly typical sequences.
As examples of applying these notions of typical sequences, achievable
rates were proved for the traditional point-to-point channel and
Gel'fand-Pinsker channels with Polish alphabets and input constraints.

\appendix

\myproof{Proof of Theorem \ref{thm:nr_ld_ub}:}
 Pick two fields $\calF_X$ and $\calF_Y$ as described in Corollary \ref{cor:field}, 
 and let $\calQ_X \subset \calF_X$ and $\calQ_Y \subset \calF_Y$ be two partitions
 of size $|\calQ_X| = N_X$ and $|\calQ_Y| = N_Y$ and elements $\calQ_X = \{A_1, \ldots, A_{N_X}\}$,
 $\calQ_Y = \{B_1, \ldots, B_{N_Y}\}$.

 Let $P^n_{XY}$ be the empirical measure induced by the pair of sequences $(\bfX^n,\bfy^n)$.
 Let $Q^{n}_{XY}$, $Q^{n}_{X}$ and $Q^n_{Y}$ denote the empirical measures 
 and empirical marginals induced on the partitions $\calQ_X \times \calQ_Y$.
 Furthermore,
 for $B_j$ such that $Q^{n}_Y(B_j) > 0$, define the conditional measure 
 $Q^n_{X|Y}(A_i|B_j) = Q^n_{XY}(A_i \times B_j) / Q^n_{Y}(B_j)$, otherwise
 $Q^n_{X|Y}(A_i|B_j)$ is arbitrary.

 Likewise, starting with $P_{XY}$, let $Q_{XY}$, $Q_X$ and $Q_{Y}$ denote the induced
 measures and marginals on the partitions $\calQ_X \times \calQ_Y$.
 For $B_j$ such that $Q_Y(B_j) > 0$, define $Q_{X|Y}(A_i|B_j) = Q_{XY}(A_i \times B_j) / Q_Y(B_j)$,
 and note that $Q_{X|Y}(.|B_j) \ll Q_X(.)$. Otherwise pick $Q_{X|Y}(A_i|B_j)$
 to be some arbitrary distribution for which $Q_{X|Y}(.|B_j) \ll Q_X(.)$.
 
 Now, pick an $\epsilon_1 > 0$ and note that 
 $(\bfX^{n}, \bfy^{n}) \in A^n_\epsilon(P_{XY})$ implies
 $P^n_{XY} \in B(P_{XY}, \epsilon)$ which for sufficiently small $\epsilon$, 
 itself implies that for all $i$ and $j$,
 \begin{align}
  |Q^n_{XY}(A_i \times B_j) - Q_{XY}(A_i \times B_j)| < \epsilon_1. \label{eq:key}
 \end{align}
Therefore, with this choice of $\epsilon$, 
\begin{align}
\mu^{n}\left( (\bfX^{n}, \bfy^{n}) \in A^n_\epsilon(P_{XY})  \right)
 \leq \mu^n\left( \bigcap_{i,j} \left\{|Q^n_{XY}(A_i \times B_j) - Q_{XY}(A_i \times B_j)| < \epsilon_1 \right\} \right).
 \label{eq:bound}
\end{align}
Let $\calJ_{>0}$ denote the set of $j$ such that $Q_Y(B_j) > 0$
and select $\bar{\epsilon}(\delta) > 0$ small enough such that
$\bfy^n \in A_{\bar{\epsilon}(\delta)}^n(P_Y)$ implies
\begin{align}
|Q^n_Y(B_j) - Q_Y(B_j)| < \epsilon_1 \quad \forall j \label{eq:close1}\\ 
1 - \epsilon_1 < \frac{Q^n_Y(B_j)}{Q_Y(B_j)} \leq 1 + \epsilon_1 \quad \forall j \in \calJ_{>0}. \label{eq:close2}
\end{align}

Now, by \eqref{eq:close1}, for all $j \notin \calJ_{>0}$ and all $i$, \eqref{eq:key} is satisfied
and the right side of the bound in \eqref{eq:bound} can be limited to the intersection of all $i$ and 
all $j \in \calJ_{>0}$. Furthermore, since for $j \in \calJ_{>0}$, \eqref{eq:key} implies
\begin{align} 
 &\left|Q^n_{X|Y}(A_i|B_j)Q^n_Y(B_j) - Q_{X|Y}(A_i|B_j)Q^n_Y(B_j)\right. \nonumber \\
 &\quad + \left. Q_{X|Y}(A_i|B_j)Q^n_Y(B_j) - Q_{X|Y}(A_i|B_j)Q_Y(B_j)\right| < \epsilon_1,
\end{align}
together with \eqref{eq:close1}, this implies
\begin{align}
 \left|Q^n_{X|Y}(A_i|B_j) - Q_{X|Y}(A_i|B_j)\right| Q^n_Y(B_j) < 2\epsilon_1,
\end{align}
and for $j \in \calJ_{>0}$, with \eqref{eq:close2} this implies
\begin{align}
 \left|Q^n_{X|Y}(A_i|B_j) - Q_{X|Y}(A_i|B_j) \right| < \frac{2\epsilon_1}{Q_Y(B_j)(1-\epsilon_1)}.
\end{align}
Let $\epsilon_2 = \max_{j \in \calJ_{>0}} \frac{2\epsilon_1}{Q_Y(B_j)(1-\epsilon_1)}$. Then
$\epsilon_2 \rightarrow 0$ as $\epsilon_1 \rightarrow 0$ and for all $i$ and $j \in \calJ_{>0}$,
\begin{align}
 \left| Q^n_{X|Y}(A_i|B_j) - Q_{X|Y}(A_i|B_j) \right| < \epsilon_2.
\end{align}
Therefore, we have shown that
\begin{align}
\mu^{n}\left( (\bfX^{n}, \bfy^{n}) \in A^n_\epsilon(P_{XY})  \right)
 &\leq \mu^n\left( \cap_{i,j \in \calJ_{>0}} \left\{ |Q^n_{X|Y}(A_i|B_j) - Q_{X|Y}(A_i|B_j)| 
       < \epsilon_2 \right\} \right) \\
 &= \prod_{j \in \calJ_{>0}} \mu^n\left( \cap_{i} \left\{ |Q^n_{X|Y}(A_i|B_j) - Q_{X|Y}(A_i|B_j)| < \epsilon_2 \right\} \right).
 \label{eq:bound2}
\end{align}
Now, let $N_{n,j} = \sum_{\ell=1}^n \1{y_\ell \in B_j}$ for any $j \in \calJ_{>0}$, 
i.e., the number of letters of $\bfy^n$ that are in $B_j$. Then for a given $j \in \calJ_{>0}$,
 \begin{align}
  \limsup_{n \rightarrow \infty} & \frac{1}{n} \log 
    \mu^n\left( \cap_{i} \left\{ |Q^n_{X|Y}(A_i|B_j) - Q_{X|Y}(A_i|B_j)| < \epsilon_2 \right\} \right) \nonumber \\
  &= \limsup_{n \rightarrow \infty} \frac{N_{n,j}}{n} \times \limsup_{n \rightarrow \infty} \frac{1}{N_{n,j}} \log 
    \mu^n\left( \cap_{i} \left\{ |Q^n_{X|Y}(A_i|B_j) - Q_{X|Y}(A_i|B_j)| < \epsilon_2 \right\} \right) \\
  &\leq -(1 - \epsilon_1) Q_Y(B_j) \times \left[ D(Q_{X|Y}(\cdot|B_j) || Q_X(\cdot) ) - \delta_{1,j} \right],
 \end{align}
where $\delta_{1,j} \rightarrow 0$ as $\epsilon_2 \rightarrow 0$ since $Q_{X|Y}(\cdot|B_j) \ll Q_X(\cdot)$
and we have used Theorem 2.1.10 of \cite{Dembo:1998}.
Therefore, 
\begin{align}
 \limsup_{n \rightarrow \infty} \frac{1}{n} \log 
    \mu^{n}\left( (\bfX^{n}, \bfy^{n}) \in A^n_\epsilon(P_{XY})  \right)
 &\leq -(1 - \epsilon_1) D(Q_{XY} || Q_X \times Q_Y) +\delta_2 \\
 &= -(1 - \epsilon_1) H_{P_{XY} || P_X \times P_Y}(\calQ_X \times \calQ_Y) +\delta_2,
\end{align}
where $\delta_2 = (1-\epsilon_1) \sum_{j \in \calJ_{>0}} \delta_{1,j}$.

If $D(P_{XY} || P_X \times P_Y)$ is finite, the result then follows by first 
choosing appropriate fine quantizers $\calQ_X$ and $\calQ_Y$ such that
\begin{align}
 -H_{P_{XY} || P_X \times P_Y}(\calQ_X \times \calQ_Y) < -D(P_{XY} || P_X \times P_Y) + \delta/2,
\end{align}
and then choosing $\epsilon_1$ small enough (thus $\epsilon(\delta)$ and $\bar{\epsilon}(\delta)$ small enough)
such that
\begin{align}
-(1 - \epsilon_1) H_{P_{XY} || P_X \times P_Y}(\calQ_X \times \calQ_Y) +\delta_2 \leq 
-D(P_{XY} || P_X \times P_Y) + \delta.
\end{align}

If $D(P_{XY} || P_X \times P_Y) = \infty$ then for every $L > 0$, we can find a pair of quantizers
such that $H_{P_{XY} || P_X \times P_Y}(\calQ_X \times \calQ_Y) > 2L$. The result follows
by choosing $\epsilon_1$ small enough (thus $\epsilon$ and $\bar{\epsilon}$ small enough)
such that $(1 - \epsilon_1) H_{P_{XY} || P_X \times P_Y}(\calQ_X \times \calQ_Y) - \delta_2 > L$.
Thus the right side of \eqref{eq:nr_ld_ub} is less than $-L$ for any positive $L$.
\endproof

\myproof{Proof of Theorem  \ref{thm:nr_ld_lb}:}
The case that $D(P_{XY} || P_X \times P_Y) = \infty$ is trivial, thus we only consider
finite $D(P_{XY} || P_X \times P_Y)$.

We first show that for each $\epsilon > 0$, there is a finite partition $\calQ_X = \{A_i\}$ 
and $\calQ_Y = \{B_j\}$ of $E_X$ and $E_Y$, and $\lambda(\epsilon)$  such that

\begin{align}
 \left \{ P^{n}_{XY} \in B(P_{XY}, \epsilon) \right \} \supset
 \bigcap_{i,j} \left \{ |P^{n}_{XY}(A_i \times B_j) - P_{XY}(A_i \times B_j)| < \lambda(\epsilon) \right \}.
\end{align}
and $\lambda(\epsilon) \rightarrow 0$ as $\epsilon \rightarrow 0$.

To see this, let $\calF_X$ and $\calF_Y$ be fields as described in Corollary \ref{cor:field}.
For $k = 1, 2, \ldots$, let $\calQ_X^k \subset \calF_X$ be a sequence of successively finer finite partitions
of $E_X$ in the sense that if $A \in \calQ_X^k$ then $A$ is the union of  atoms of $\calQ_X^{k+1}$ 
and for each $A \in \calF_X$, $A$ is the union of atoms of $\calQ_X^k$ for some $k$. 
Likewise for $\calQ_Y^k \subset \calF_Y$.
We denote the atoms of $\calQ_X^k$ by $A_i^k$, $i = 1, \ldots, |\calQ_X^k| \eqdef N_X^k$, and 
likewise for the atoms $B_j^k$ of $\calQ_Y^k$.

Consider any sequence $M_{XY}^k$ of distributions that satisfy the sequence of events
\begin{align}
 \bigcap_{i,j} \left \{ |M^{k}_{XY}(A^k_i \times B^k_j) - P_{XY}(A^k_i \times B^k_j)| < \lambda_k \right \} \label{eq:forced}
\end{align}
where $\lambda_k \eqdef 1 / (k \times N^k_X \times N^k_Y)$. 
Then for any $A \in \calF_X$ and $B \in \calF_Y$, $M^k_{XY}(A \times B) \rightarrow P_{XY}(A \times B)$.
Thus by \cite[Chapter 1, Theorem 2.2]{Billingsley:1999}, the sequence of events in \eqref{eq:forced}
implies $P_{XY} = \wlim_k M_{XY}^{k}$. This implies
$\lim_k d(M_{XY}^{k},  P_{XY}) = 0$. Let $\epsilon_k =
\sup d(M_{XY}^{k},  P_{XY})$, where the supremum is over
$M_{XY}^{k}$ such that \eqref{eq:forced} holds at the $k$th step. 
We must have $\epsilon_k \rightarrow 0$ or there would be a choice
of $M_{XY}^{k}$ satisfying \eqref{eq:forced} such that 
$P_{XY} = \wlim_k M_{XY}^{k}$ does not hold.
Let $K$ be such that $\epsilon_K < \epsilon$. 

Hence, we can pick $\calQ_X = \calQ_X^K$, $\calQ_Y = \calQ_Y^K$ and 
any $\lambda(\epsilon) \leq \lambda_K$.
Therefore, with these choices,
\begin{align}
 &\liminf_{n \rightarrow \infty} \frac{1}{n} \log 
     \mu^{n}( (\bfX^{n}, \bfy^{n}) \in A^{n}_{\epsilon}(P_{XY}) ) \nonumber \\
 &\quad \geq
 \liminf_{n \rightarrow \infty} \frac{1}{n} \log 
     \mu^{n} \left(\bigcap_{i,j} \left \{ |P^{n}_{XY}(A_i \times B_j) - P_{XY}(A_i \times B_j)| < \lambda(\epsilon) \right \}.\right) \\
%
 &\quad \geq^{(a)}
 - D(P_{XY} || P_X \times P_Y) - \delta,
\end{align}
where inequality $(a)$ is justified below. 

To justify inequality $(a)$, first let $Q_{XY}$, $Q^n_{XY}$, etc denote the
appropriate induced distributions on the partitions of $\calQ_X$ and $\calQ_Y$
as in the proof of Theorem \ref{thm:nr_ld_ub}. Then, for any $\alpha > 0$,
\begin{align}
 \left| Q_{X|Y}^n(A|B) - Q_{X|Y}(A|B) \right| < \alpha \label{eq:epsilon}
\end{align}
implies
\begin{align}
 \left| Q_{XY}^n(A \times B) - Q_{XY}(A \times B) \right| < \alpha Q_Y(B) + \left|Q_Y^n(B) - Q_Y(B) \right|.
\end{align}
If $\bar{\epsilon}(\epsilon, \delta)$ is sufficiently small that 
$\bfy^n \in A_{\bar{\epsilon}(\epsilon, \delta)}^n(P_Y)$
implies $\left|Q_Y^n(B) - Q_Y(B) \right| < \alpha$ for all $B \in \calQ_Y$ with $\alpha < \lambda(\epsilon)/2$,
then the event
\begin{align}
\bigcap_{i,j} \{ \left| Q_{X|Y}^n(A_i|B_j) - Q_{X|Y}(A_i|B_j) \right| < \alpha \}
\end{align}
is a subset of
\begin{align}
\bigcap_{i,j} \left \{ |P^{n}_{XY}(A_i \times B_j) - P_{XY}(A_i \times B_j)| < \lambda(\epsilon) \right \}.
\end{align}
Thus
\begin{align}
 &\liminf_{n \rightarrow \infty} \frac{1}{n} \log 
     \mu^{n} \left(\bigcap_{i,j} \left \{ |P^{n}_{XY}(A_i \times B_j) - P_{XY}(A_i \times B_j)| < \lambda(\epsilon) \right \}.\right) \nonumber \\
 &\quad \geq
 \liminf_{n \rightarrow \infty} \frac{1}{n} \log 
     \mu^{n} \left( \bigcap_{i,j} \{ \left| Q_{X|Y}^n(A_i|B_j) - Q_{X|Y}(A_i|B_j) \right| < \alpha \}  \right) \\
 &\quad \geq -\sum_j (Q_Y(B_j) + \alpha) D(Q_{X|Y}(\cdot|B_j) || Q_X(\cdot) ) \\
 &\quad = -H_{P_{XY} || P_X \times P_Y}(\calQ_X \times \calQ_Y) - \alpha \sum_j D(Q_{X|Y}(\cdot|B_j) || Q_X(\cdot) ) \\
 &\quad \geq -D(P_{XY} || P_X \times P_Y) - \delta, \label{eq:nr_ld_lb}
\end{align}
where $\delta = \alpha \sum_j D(Q_{X|Y}(\cdot|B_j) || Q_X(\cdot) )$ is
finite\footnotemark as $D(P_{XY} || P_X \times P_Y) < \infty$. Furthermore,
$\delta$ can be made arbitrarily small by choosing $\alpha$ small enough,
which can be assured by choosing $\bar{\epsilon}(\epsilon, \delta)$ small enough.

\footnotetext{If $Q_Y(B_j) > 0$ then $D(Q_{X|Y}(\cdot|B_j) || Q_X(\cdot) )$ must be finite
as otherwise, $D(P_{XY} || P_X \times P_Y) \geq D(Q_{XY} || Q_X \times Q_Y) = \infty$. If $Q_Y(B_j) = 0$, then
$Q_{X|Y}(\cdot|B_j)$ is arbitrary, and selecting $Q_{X|Y}(\cdot|B_j) = P_X(\cdot)$
results in $D(Q_{X|Y}(\cdot|B_j) || Q_X(\cdot) ) = 0$.}
\endproof

\bibliographystyle{IEEEtranS}
\bibliography{mitran2}

\end{document}